\newtheorem{theorem}{Theorem}
\newtheorem{lemma}{Lemma}
\newtheorem{definition}{Definition}
\newtheorem{corollary}{Corollary}
\newtheorem{observation}{Observation}
\newtheorem{proposition}{Proposition}
\newcommand{\fpt}{{\sf FPT}\xspace}
\newcommand{\npc}{\textsf{NP}-complete\xspace}
\newcommand{\nph}{\textsf{NP}-hard\xspace}
\newcommand{\woh}{\textsf{W[1]}-hard\xspace}
\newcommand{\yes}{\texttt{Yes}\xspace}
\newcommand{\no}{\textsf{No}\xspace}
\newcommand{\ccC}{{\mathcal C}}
\newcommand{\oh}{{\mathcal O}}
\newcommand{{\DAGstLengthWeight}}{{\sc $(s,t)$-DAG-Length-and-Edge-Weight}}
\newcommand{\spfgp}{{\sc Shortest Path with Forcing Graph}\xspace}
\newcommand{\OO}{{\mathcal O}}
\newcommand{\GG}{{\mathcal G}}
\newcommand{\defprob}[3]{
	%\vspace{1mm}
	\begin{tcolorbox}[colback=gray!5!white,colframe=gray!75!black]
		\vspace{-1mm}
		% \begin{minipage}{0.96\textwidth}
			\begin{tabular*}{\textwidth}{@{\extracolsep{\fill}}lr} #1   
            \vspace{2mm} \\ \end{tabular*}
			{\bf{Input:}} #2  
            \vspace{2mm}\\
			{\bf{Goal:}} #3
			%\end{minipage}
		\end{tcolorbox}
		% \vspace{1mm}
	}
\newcommand{\defparprob}[4]{
 % \vspace{1mm}
\begin{tcolorbox}[colback=gray!5!white,colframe=gray!75!black]
  \vspace{-1mm}
% \begin{minipage}{.96\textwidth}
  \begin{tabular*}{\textwidth}{@{\extracolsep{\fill}}lr} #1  & {\bf{Parameter:}} #3 \\ \end{tabular*}
  {\bf{Input:}} #2  \\
  {\bf{Question:}} #4
 % \end{minipage}
  \vspace{-1mm}
\end{tcolorbox}
}
\title{On the Polynomial Kernelizations of Finding a Shortest Path with Positive Disjunctive Constraints\thanks{A preliminary version of this paper has appeared in Proceedings of COCOA-2024 \cite{BandopadhyayBMP24}. Research of Diptapriyo Majumdar has been supported by Science and Engineering Board Research (SERB) grant SRG/2023/001592.}}
\author[1]{Susobhan Bandopadhyay}
\author[2]{Suman Banerjee}
\author[3]{Diptapriyo Majumdar}
\author[4]{Fahad Panolan}
\affil[1]{Tata Institute of Fundamental Research, Mumbai, India\\ susobhan.bandopadhyay@tifr.res.in}
\affil[2]{Department of Computer Science and Engineering, Indian Institute of Technology, Jammu, India\\suman.banerjee@iitjammu.ac.in}
\affil[3]{Indraprastha Institute of Information Technology Delhi, New Delhi, India\\ diptapriyo@iiitd.ac.in}
\affil[4]{School of Computer Science,  University of Leeds,  United Kingdom\\ 
F.Panolan@leeds.ac.uk}
\date{}
\begin{document}

\maketitle

\begin{abstract}
     We study the {\sc Shortest Path} problem subject to positive binary disjunctive constraints. 
In positive disjunctive constraints, there are certain pairs of edges such that at least one edge from every pair must be part of every feasible solution. 
We initiate the study of {\sc Shortest Path} with binary positive disjunctive constraints in the perspective of parameterized complexity.
Formally, the input instance is a simple undirected graph $G = (V, E)$, a forcing graph $G_f = (E, E')$, two vertices $s, t \in V(G)$ and an integer $k$.
Note that the vertex set of $G_f$ is the same as the edge set of $G$.
The goal is to find a set $S$ of at most $k$ edges from $G$ such that there is a path from $s$ to $t$ in the subgraph $G = (V, S)$ and $S$ is a vertex cover in $G_{f}$.
In this paper, we consider two different natural parameterizations for this problem.
One natural parameter is the solution size, i.e. $k$ which provides polynomial kernelization results.
The other natural parameter is structural parameterizations of $G_f$, i.e. the size of a modulator $X \subseteq E(G) = V(G_f)$ such that $G_f - X$ belongs to some hereditary graph class.
We discuss the parameterized complexity of this problem under some structural parameterizations.    
 \end{abstract}
 
\section{Introduction}
\label{sec:intro}

In recent times, several classical combinatorial optimization problems on graphs including maximum matching, shortest path, Steiner tree, etc. have been studied along with some additional binary conjunctive and/or disjunctive constraints  \cite{agrawal2020parameterized,darmann2011paths}.
Darmann et al. \cite{darmann2011paths} have studied finding shortest paths, minimum spanning trees, and maximum matching of graphs with binary disjunctive constraints from the perspective of classical complexity.
These positive or negative binary constraints are defined with respect to pairs of edges.
\begin{itemize}
	\item A {\em negative disjunctive constraint} between an edge-pair $e_i$ and $e_j$ says that both $e_i$ and $e_j$ cannot be present in a feasible solution.
	\item A {\em positive disjunctive constraint} between an edge pair $e_i$ and $e_j$ says that either the edge $e_i$ or the edge $e_j$ or both must be present in any feasible solution.
\end{itemize}

In the graph theoretic problems like {\sc Maximum Matching, Minimum Spanning Tree, Shortest Path} etc, the objective is to select a subset of edges from a {\em primary graph} satisfying certain properties.
The negative disjunctive constraint can be interpreted as a {\em conflict graph} such that each vertex of the conflict graph corresponds to an edge in the {primary graph}.
Furthermore, for every edge in the conflict graph, at most one endpoint can be part of any feasible solution.
Thus, a feasible solution must be an independent set in the conflict graph.
The positive disjunctive constraints can be interpreted as a {\em forcing graph} such that each vertex of the forcing graph corresponds to an edge in the {primary graph}.
In the forcing graph, each edge must have at least one endpoint included in any feasible solution.
Therefore, in the case of positive disjunctive constraints, a feasible solution must be a vertex cover in the forcing graph.
Formally, an input to the \textsf{Forcing-Version} of a classical combinatorial optimization problem $\Pi$, called as \textsf{Forcing-Version} $\Pi$ consists of an instance $I$ of $\Pi$ along with a forcing graph $G_{f}$; i.e.; $(I, G_{f})$.
The vertex set of the forcing graph is the edge set of the {primary graph}. 
A solution of \textsf{Forcing-Version} $\Pi$ for the instance $(I, G_{f})$ is a solution of $I$ for the original problem, along with the property that the solution forms a vertex cover for $G_{f}$.
To the best of our knowledge, none of the problems {\sc Shortest Path}, {\sc Maximum Matching}, {\sc Minimum Spanning Tree} have been explored with positive disjunctive constraints from the perspective of parameterized complexity.

\medskip

In this paper, we initiate the study of {\sc Shortest Path} problem with positive disjunctive constraints from the perspective of parameterized complexity and kernelization (see Section 2 for definitions, etc).
Formally, in the  \textsc{{\spfgp} (SPFG)} problem, we are given a simple, unweighted graph $G(V,E)$, two vertices $s$ and $t$, a positive integer $k$ and a forcing graph $G_{f}(E, E^{'})$.
The decision version of this problem asks to check whether there exists a set $E^* \subseteq E(G)$ of at most $k$ edges such that the subgraph induced by the vertex set $V(E^*)$ in $G$ contains an $s$-$t$ path and also $E^*$ forms a vertex cover in $G_{f}$.
As ``solution size'' is the most natural parameter, we formally state the definition of the parameterized version of our problem as follows.
%
%\begin{center}
%\begin{tcolorbox}[title={\spfgp}, width=15.4cm]
%%\underline{\textsc{Budgeted Influence Maximization Problem}} \\
%%\vspace*{2 cm} 
%\textbf{Input:} A simple, undirected graph $G(V,E)$, two distinct vertices $s,t \in V(G)$, a positive integer $k$, and a forcing graph $G_{f}(E, E^{'})$.
%
%\textbf{Parameter:} $k$.
%
%\textbf{Question:} Is there a set $E^*$ of at most $k$ edges from $G$ such that the subgraph $G(V, E^{*})$ contains an $s$-$t$ path in $G$, and $E^{*}$ forms a vertex cover in $G_{f}$?
%%\end{itemize}
%\end{tcolorbox}
%\end{center}

    \defparprob{\textsc{Shortest Path with Forcing Graph}}{
A simple, undirected graph $G(V,E)$, two distinct vertices $s,t \in V(G)$, a positive integer $k$, and a forcing graph $G_{f}(E, E^{'})$.
}{k}{
 Is there a set $E^*$ of at most $k$ edges from $G$ such that the subgraph $G(V, E^{*})$ contains an $s$-$t$ path in $G$, and $E^{*}$ forms a vertex cover in $G_{f}$?}

In the above definition, the considered parameter is `solution size'.
Darmann et al. \cite{darmann2011paths} have proved that {\spfgp} is NP-hard even when the forcing graph $G_f$ is a graph of degree at most one.
So, it can be concluded that even when the forcing graph $G_f$ is very sparse, then also {\spfgp} is NP-hard.
In the first part of our paper, we consider {\spfgp} when parameterized by solution size.

In addition, it is also natural to consider some parameters that are some structure of the input.
Observe that the solution size must be as large as the minimum vertex cover size of the forcing graph.
We also consider the `deletion distance of $G_f$ to some hereditary graph class $\GG$', such that this deletion distance is a parameter that is at most as large as the solution size whenever $\GG$ contains a graph that has at least one edge.
%\todo[inline]{Diptapriyo: Review-2 asks for additional details. What additional details can we give here?}% A few examples that illustrate the cases of prova
In the second part of our paper, we also initiate the study of this problem when the considered parameter is the deletion distance of $G_f$ to some special graph class.
Formally, the definition of this parameterized version is the following.

%\begin{center}
%\begin{tcolorbox}[title={\sc SPFG-$\GG$-Deletion}, width=15.4cm]
%%\underline{\textsc{Budgeted Influence Maximization Problem}} \\
%%\vspace*{2 cm} 
%\textbf{Input:} A simple, undirected graph $G(V,E)$, two distinct vertices $s,t \in V(G)$, a positive integer $k$, a forcing graph $G_{f}(E, E^{'})$, a set $X \subseteq E$ such that $G_f - X \in \GG$.
%
%\textbf{Parameter:} $|X|$.
%
%\textbf{Question:} Is there a subset $E^*$ of at most $k$ edges from $G$ such that the subgraph $G(V, E^{*})$ contains an $s$-$t$ path in $G$, and $E^{*}$ forms a vertex cover in $G_{f}$?
%%\end{itemize}
%\end{tcolorbox}
%\end{center}

\defparprob{\textsc{SPFG-$\GG$-Deletion}}{
 A simple, undirected graph $G(V,E)$, two distinct vertices $s,t \in V(G)$, a positive integer $k$, a forcing graph $G_{f}(E, E^{'})$, a set $X \subseteq E$ such that $G_f - X \in \GG$.
}{$|X|$}{
 Is there a subset $E^*$ of at most $k$ edges from $G$ such that the subgraph $G(V, E^{*})$ contains an $s$-$t$ path in $G$, and $E^{*}$ forms a vertex cover in $G_{f}$?}
 
\paragraph{Our Contributions:} In this paper, we study the {\spfgp} under the realm of parameterized complexity and kernelization (i.e., polynomial-time preprocessing).
We first consider both the {primary graph} $G$ and the forcing graph $G_f$ to be arbitrary graphs and provide {\fpt} and kernelization algorithms.
Next, we initiate a systematic study on what happens to the kernelization complexity when either $G$ is a special graph class or $G_f$ is a special graph class.
Formally, we provide the following results for {\spfgp} when the solution size ($k)$ is considered as the parameter.

\begin{itemize}
	\item First, we prove that {\spfgp} admits an algorithm that runs in $\oh^{*}(2^k)$-time\footnote{$\oh^*$ hides polynomial factor in the input size.}. Moreover, we prove that {\spfgp} is polynomial-time solvable when $G_f$ is the class of all $2K_2$-free graphs (i.e., $C_4$-free graphs).
	\item Then, we prove our main result. In particular, we prove that {\spfgp} admits a kernel with $\oh(k^5)$-vertices when there are no restrictions on $G$ and $G_f$.
	\item After that, we consider the kernelization complexity of {\spfgp} when $G$ is a planar graph and $G_f$ is a general graph.
	In this condition, we provide a kernel with $\oh(k^3)$ vertices for {\spfgp}.
	
	\item Next, we consider when $G$ is an arbitrary graph and $G_f$ is a graph belonging to a special hereditary graph class. In this paper, we focus on the condition when $G_f$ is a cluster graph (i.e., a disjoint union of cliques) or a bounded degree graph.
	In both these conditions, we provide a kernel with $\oh(k^3)$ vertices for {\spfgp}.
\end{itemize} 

Finally, we consider the {\sc SPFG-${\GG}$-Deletion} problem.
It follows from the results of Darmann et al. \cite{darmann2011paths} that even if $X = \emptyset$ and $G_f$ is a $2$-ladder (also known as a collection of disjoint edges) {\sc SPFG-$\GG$-Deletion} is {\nph}. 
Therefore, it is unlikely to expect the possibility that {\sc SPFG-$\GG$-Deletion} would admit an \fpt algorithm even when $\GG$ is a very sparse graph class.
We complement their {\sf NP}-hardness result by proving that
\begin{itemize}
	\item {\sc SPFG-${\GG}$-Deletion} admits an {\fpt} algorithm when $\GG$ is the class of all $2K_2$-free graphs.
\end{itemize}

\paragraph{Related Work} Recently, conflict-free and forcing variants of several classical combinatorial optimization problems including Maximum Flow \cite{pferschy2013maximum}, Bin Packing \cite{epstein2011online,gendreau2004heuristics,jansen1999approximation} Knapsack \cite{pferschy2009knapsack,pferschy2017approximation}, Maximum Matching \cite{darmann2009determining}, Minimum Spanning Tree \cite{darmann2009determining,darmann2011paths}, Set Cover \cite{even2009scheduling}, Shortest Path \cite{darmann2011paths} has been studied extensively in both algorithmic and complexity-theoretic point of view. 
It has been observed that most of them are {\nph} even when the classical version is polynomial-time solvable.
 Recently, some of these problems have also been studied in the realm of parameterized complexity as well \cite{agrawal2020parameterized,jain2020conflict}. 
Agrawal et al. \cite{agrawal2020parameterized} {considered} the conflict-free version of the {\sc Maximum Matching} and the {\sc Shortest Path}.
 They showed that {both problems} are {\woh} when parameterized by solution size.
 They also investigated the complexity of the problems when the conflict graph has some topological structure.
 They gave an \textsf{FPT} algorithm for the conflict-free matching problem when the conflict graph is chordal.
 They also gave \textsf{FPT} Algorithms for {both the problems} when the conflict graph is $d$-generate.
 They also showed that {both problems} are in \textsf{FPT} when the conflicting conditions can be represented by matroids.
 Darmann et al. \cite{darmann2011paths} studied {both the problems} along with both the constraints conflict graph and forcing graph.
 They showed that the conflict-free variant of the maximum matching problem is {\nph} even when {the conflict graph is a $2$-ladder}.
 Similarly, the conflict-free variant of the shortest path problem is \textsf{APX}-hard even when the conflict graph is $2$-ladder.
 They also showed the minimum cost spanning tree problem is strongly {\nph} even when the conflict graph is a disjoint union of paths of length two or three.  
  
 \section{Preliminaries}
 In this section, we describe the notations and symbols used in this paper.
 Additionally, we also provide some basic results in this section.

\subsection{Graph Theory}
 \label{sec:graph-theory}
 
All the graphs considered in this paper are simple, finite, undirected, and unweighted.
%\todo[inline]{Diptapriyo: need to define walk and path for Review-2.}
{A {\em walk} in an undirected graph is a sequence $v_0, e_1, v_1, e_2, v_2,\ldots, e_k, v_k$ of vertices and edges such that edge $e_i$ has endpoints $v_{i-1}$ and $v_i$. 
An {\em $s$-$t$ walk} is a walk that has the first vertex $s$ and last vertex $t$.
An {\em $s$-$t$ path} is a walk that has no repeated vertex.
It is clear from the definition that a path cannot have any repeated edges.
We often denote a path $P$ by a sequence of vertices $v_1, v_2, \ldots, v_k$ such that every edge is $v_i v_{i+1}$ for every $i \in [k-1]$.}
Other notations and terminologies used in this paper are standard and adopted from Diestel's book of graph theory \cite{diestel2005graph}.
 In our problem, we are dealing with two different graphs: the {primary graph} $G$ and the forcing graph $G_{f}$.
 We denote the number of vertices and edges of $G$ by $n$ and $m$, respectively.
 Similarly, for $G_{f}$, it is $m$ and $m'$ (since the edge set of $G$ is the same as the vertex set of $G_f$).
 Given a graph $G(V, E)$ and an edge $e \in E(G)$, $V_{e}$ denotes the set of { the two endpoints} of $e$.
 For any subset of edges, $E^* \subseteq E(G)$,
 by $V(E^*)$ we denote the { set of all endpoints of the edges in $E^*$}; i.e.; $V(E^*)= \underset{e \in {E^*}}{\bigcup} \ V_{e}$.
% Informally, $V(E^*)$ denotes the set of all endpoints of the edges in $E^*$.
 For every $u, v \in V(G)$, we denote its shortest path distance by $dist(u,v)$.
 For any graph $G(V, E)$, a subset of its vertices ${S} \subseteq V(G)$ is said to be a \emph{vertex cover} of $G$ if every edge of $G$ has at least one of its endpoints in ${S}$.
 A subset of the vertices ${S}$ is said to be an independent set of $G$ if, between any pair of vertices of $S$, there does not exist any edge in $G$.
 For any subset of vertices $S$ of $G$, the subgraph induced by the vertex set in $G$ is denoted by $G[S]$.
 %Furthermore, for a set of edges $S$, the graph $G[S]$ is the graph with $G'(V, S)$.
 Given an edge set $S \subseteq E(G)$, the graph $G[S]$ has vertex set $V(G)$ and the edge set $S$.
 For any graph $G$ and any vertex $v \in V(G)$, $G-\{v\}$ denotes the graph that can be obtained by deleting $v$ and the edges incident on it from $G$.
 This notion can be extended to a subset of vertices as well.
 Given a graph $G = (V, E)$ and a set $X \subseteq V(G)$, we define the {\em identification} of the vertex subset $X$ into (a new vertex) $u_X$ by constructing a graph $\hat G$ as follows.
 First, delete the vertices of $X$ from $G$ and add a new vertex $u_X$.
 Then, for every $v \in N_G(X)$, make $vu_X$ an edge of $\hat G$.
{ This graph operation has been defined in the literature (see \cite{ComasSerna2009,MajumdarRS20,MorelleST2026}).}
\medskip 

 A graph is said to be a \emph{cluster graph} if every connected component is a clique.
 A graph is said to be a {\em degree-$\eta$-graph} if every vertex has degree at most $\eta$.
 A connected graph is said to be {\em $2K_{2}$-free} if it does not contain any pair of edges that are nonadjacent to each other.
 A graph is said to be a {\em planar graph} if it can be drawn on the surface of a sphere without crossing edges.
 We use the following property of a planar graph in our results.
 
\begin{proposition}[\cite{WestGraphTheorybook}]
\label{prop:planar-graph-edges}
If $G$ is a simple planar graph with $n$ vertices, then $G$ has at most $3n - 6$ edges.
\end{proposition}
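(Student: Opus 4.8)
The plan is to derive the bound from Euler's formula for planar graphs. First I would reduce to the case where $G$ is connected and $n \ge 3$: if $n \le 2$ the claim is immediate (a simple graph on at most two vertices has at most one edge, and $3n-6 \ge 0$ already for $n \ge 2$), and if $G$ is disconnected I would add edges joining distinct connected components until the graph becomes connected. Adding such edges keeps the graph simple and planar while only increasing the edge count, so an upper bound established for the connected augmented graph transfers immediately back to $G$.

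Next, after fixing a plane embedding of the connected graph, I would invoke Euler's formula $n - m + f = 2$, where $f$ is the number of faces, counting the unbounded outer face. The key combinatorial step is a double count of edge--face incidences along the boundary walks of the faces. Because $G$ is simple with $n \ge 3$, the boundary walk of every face traverses at least three edges, while the total length of all boundary walks, summed over the faces, equals exactly $2m$, since every edge is traversed precisely twice in total. Combining these two observations yields $3f \le \sum_{F} \ell(F) = 2m$, hence $f \le \tfrac{2m}{3}$.

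Finally I would substitute this into Euler's formula to obtain $2 = n - m + f \le n - m + \tfrac{2m}{3} = n - \tfrac{m}{3}$, which rearranges to the desired inequality $m \le 3n - 6$. The one point that needs genuine care --- and is the main obstacle --- is the double-counting claim: one must resist the tempting but slightly wrong formulation ``each edge borders at most two faces,'' which breaks for a bridge that lies on only one face yet is traversed twice by that single boundary walk. The correct bookkeeping is to count incidences \emph{with multiplicity} along closed boundary walks, so that a bridge still contributes two to the total $2m$; phrased this way the inequality $3f \le 2m$ holds for every connected simple plane graph with $n \ge 3$, and it is this step that I would write out most carefully to make the argument fully rigorous.
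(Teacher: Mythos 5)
The paper gives no proof of this proposition at all---it is stated as a known fact with a citation to West's textbook---and your Euler's-formula argument (reduce to the connected case, double-count edge--face incidences along boundary walks to get $3f \le 2m$, substitute into $n - m + f = 2$) is exactly the standard proof from that source, including the correct, careful treatment of bridges in the double count. The one flaw is your dismissal of $n \le 2$: the inequality as stated is actually \emph{false} there (for $n = 2$ the graph $K_2$ has $1$ edge while $3n - 6 = 0$), so observing that ``a simple graph on two vertices has at most one edge'' is a non-sequitur; the theorem genuinely requires the hypothesis $n \ge 3$ (as West states it), and you should add that hypothesis---harmless for how the paper uses the bound---rather than claim the degenerate cases are immediate.
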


A graph is said to be a {\em $2$-ladder} if every connected component is a path of length one.
Similarly, a graph is said to be a {\em $3$-ladder} if every connected component is a path of length two.
 
  
\subsection{Parameterized Complexity and Kernelization}
\label{sec:FPT-kernels}
 
A {\em parameterized problem} $\Pi$ is denoted as a subset { of} $\Sigma^* \times {\mathbb{N}}$.
An instance of a parameterized problem is denoted by $(I,k)$ where $(I, k) \in \Sigma^{*} \times \mathbb{N}$ where $\Sigma$ is a finite set of alphabets and $\mathbb{N}$ is the set of natural numbers.

\begin{definition}
\label{defn:FPT}
A parameterized problem $\Pi \subseteq \Sigma^* \times \mathbb{N}$ is said to be {\em fixed-parameter tractable} (or {\em FPT} in short) if there exists an algorithm $\mathcal{A}$ which given an instance $(I, k)$ of $\Pi$, runs in $\oh(f(k) \cdot |I|^{c})$ time where $f: \mathbb{N} \rightarrow \mathbb{N}$ is a computable function depending only on $k$, and $c$ is a positive constant independent of $n$ and $k$.
\end{definition}

We denote the running time $\oh(f(k) \cdot |I|^{c})$ by the shorthand notation $\oh^{*}(f(k))$ where we supress the polynomial factors.
 We adopt the notations and symbols related to parameterized algorithms { from} the book \cite{cygan2015parameterized,downey2012parameterized}.
 
{
 \begin{definition}
 \label{defn:kernel}
 A parameterized problem $\Pi$ admits a {\em kernelization} (or a {\em kernel}) if there exists a polynomial-time algorithm $\mathcal{A}$ that given  an input instance $(I,k)$ of $\Pi$, constructs an equivalent instance $(I^{'},k^{'})$ of $\Pi$ such that $|I^{'}| + k^{'} \leq g(k)$ for some computable function $g(\cdot)$.
 \end{definition}}
 
It has been shown by Cai et al. \cite{cai1997advice} that a problem is in \textsf{FPT} if and only if there exists a kernelization.
If a parameterized problem $\Pi$ admits a kernelization algorithm, we also { say} that $\Pi$ admits a {\em kernel} (in short).
This function $g(\cdot)$ in Definition \ref{defn:kernel} denotes the {\em size} of the kernel.
If $g(k)$ is bounded by a function polynomial in $k$, then $\Pi$ is said to admit a {\em polynomial kernel}.
We describe the kernelization process by writing a number of reduction rules.
A reduction rule takes one instance (say $\mathcal{I}$) of $\Pi$ and generates the reduced instance (say $\mathcal{I}^{'}$) of $\Pi$.
We say a reduction rule is {\em safe} if the following condition holds: ``$\mathcal{I}$ is a \yes-instance if and only if $\mathcal{I}^{'}$ is a \yes-instance."
The efficiency of a kernel (or kernelization algorithm) is determined by the size of the kernel. 
Many parameterized problems are fixed-parameter tractable but do not admit polynomial kernels unless ${\sf NP} \subseteq {\sf coNP/poly}$ (see \cite{BodlaenderJK13,BodlaenderJK14,BougeretJS22,DomLS14,EinarsonGJMW23,FluschnikHNN18,Hermelin20}).
So, from the perspective of polynomial-time preprocessing, we look for kernels of polynomial size.

\paragraph{\textbf{Graph Parameters.}} 
 In parameterized complexity, the natural parameterization is the solution size.
However, several structural graph parameters have also been taken into account \cite{guo2004structural}.
In our paper, the natural parameter is the vertex cover of the forcing graph.
As mentioned in \cite{HarrisN24}, a vertex cover of size at most $k$ can be computed in $\oh^{*}(1.25284^{k})$ time, where $k$ is the size of the vertex cover. 
 Another important graph parameter is the \emph{modulator to $\mathcal{G}$} where $\mathcal{G}$ is a graph class. 
A subset of the vertices $S \subseteq V(G)$ is said to be a {\em modulator} to graph class $\mathcal{G}$ if $G - S \in {\GG}$\footnote{The modulator to a graph class $\GG$ is often referred as ``deletion set to $\GG$'' in the literature.}.
 
 %Let $G = (A \uplus B, E)$ be a bipartite graph.
%\end{lemma}
 
\subsection{Some Preliminary Algorithmic Results} 
\label{sec:basic-results}

In this section, we establish some classical complexity dichotomy results and some related parameterized complexity results for this problem. 
The first part of this section gives a proof that the problem is polynomial-time solvable when the forcing graph is a $2K_2$-free graph.
Since the following lemma is crucial to give both the classical and parameterized complexity results, we first define an annotated problem and discuss that it can be solved in polynomial time.

\defprob{\textsc{Ext-SPFG}}{
 A simple undirected graph $G(V,E)$, two distinct vertices $s,t \in V(G)$, a forcing graph $G_{f}(E, E^{'})$ and a vertex cover $S$ of $G_f$.
}{Find a subset $E^{*} \subseteq E$ with minimum number of edges such that
$S \subseteq E^*$, i.e. $E^*$ extends $S$, and the induced subgraph in $G$ by the edge set $E^{*}$ contains an $s$-$t$ path in $G$.
}

\medskip

\begin{lemma}
\label{lemma:extend-part-poly}
{\sc Ext-SPFG} can be solved in polynomial time.
\end{lemma}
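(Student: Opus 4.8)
The plan is to reduce \textsc{Ext-SPFG} to a single shortest-path computation under a suitable $0/1$ edge-weighting. The key observation is that the two requirements on $E^*$ decouple once $S$ is fixed: since $S$ is a vertex cover of $G_f$ and every superset of a vertex cover is again a vertex cover, any edge set $E^*$ with $S \subseteq E^*$ automatically forms a vertex cover of $G_f$. Hence the only genuine constraint is that $G[E^*]$ contain an $s$-$t$ path, while the objective $|E^*|$ is minimized subject to $S \subseteq E^*$.

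First I would argue that minimizing $|E^*|$ is equivalent to minimizing $|E^* \setminus S|$, because $S$ is forced into every feasible solution. Moreover, in any optimal solution we may assume $E^* = S \cup P$, where $P$ is exactly the set of edges of some $s$-$t$ path that do not already lie in $S$. Indeed, given an optimal $E^*$, pick a simple $s$-$t$ path $Q$ in $G[E^*]$ with edge set $E_Q \subseteq E^*$, and replace $E^*$ by $S \cup E_Q$; this still contains the path $Q$, still contains $S$, and has size $|S| + |E_Q \setminus S| \le |S| + |E^* \setminus S| = |E^*|$, so it is again optimal and in the desired normalized form. Thus an optimal $E^*$ arises from an $s$-$t$ path in $G$ that uses as few edges outside $S$ as possible.

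Next I would make this precise by assigning weight $w(e) = 0$ to every edge $e \in S$ and weight $w(e) = 1$ to every edge $e \in E \setminus S$. Under this weighting, the weight of an $s$-$t$ path equals the number of its edges lying outside $S$, so a minimum-weight $s$-$t$ path corresponds exactly to a minimum-cardinality set $P$ of ``new'' edges, and the optimal value of \textsc{Ext-SPFG} equals $|S|$ plus the minimum path weight. Since all weights are nonnegative, such a path is computed by Dijkstra's algorithm in polynomial time; alternatively, because the weights are $0/1$, one may contract each connected component of the subgraph $G[S]$ into a single vertex and run breadth-first search on the contracted graph. If $t$ is unreachable from $s$ in $G$ then no feasible $E^*$ exists and we report this; otherwise we output $E^*$ as $S$ together with the weight-$1$ edges on the computed path.

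There is essentially no hard step here: correctness follows from the identity ``weight of a path $=$ number of its edges outside $S$'' combined with the exchange argument in the second paragraph, and the polynomial running time is immediate from the shortest-path subroutine. The only points requiring a little care are verifying that an optimal solution can always be taken in the normalized form $S \cup P$ (so that the shortest-path value truly equals $|E^*| - |S|$) and handling the infeasible case in which $s$ and $t$ lie in different connected components of $G$.
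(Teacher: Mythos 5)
Your proof is correct, but it follows a genuinely different route from the paper's, and the difference matters. The paper contracts only the two components $C_1 \ni s$ and $C_2 \ni t$ of $G[S]$ into single vertices and then takes an ordinary, \emph{unweighted} shortest path between the two contracted vertices, asserting that an optimal solution ``must be a shortest path between $u \in C_1$ and $v \in C_2$.'' Your algorithm instead makes every edge of $S$ free (weight $0$) and every other edge cost $1$ --- equivalently, it contracts \emph{every} component of $G[S]$, not just the two containing $s$ and $t$ --- and minimizes the number of new edges. This is more than a cosmetic difference: an unweighted shortest path does not account for the possibility that the cheapest connection from $C_1$ to $C_2$ passes through a third component of $G[S]$ whose edges are already paid for. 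For instance, if $S$ induces a path $a$-$b$-$c$-$d$-$e$ with $sa, et \in E(G) \setminus S$, while $G$ also has a direct $s$-$t$ path on five edges disjoint from $V(S)$, the unweighted shortest path selects the direct route ($5$ new edges) although the optimum adds only $2$; your $0/1$ weighting selects the correct route. Your exchange argument (any optimum can be normalized to $S \cup E(Q)$ for a simple $s$-$t$ path $Q$, so the optimum equals $|S|$ plus the minimum number of non-$S$ edges on an $s$-$t$ path) supplies exactly the correctness claim that the paper leaves implicit. In short, your approach yields a complete and sound proof, whereas the paper's argument, read literally, is only valid when no component of $G[S]$ other than $C_1$ and $C_2$ can serve as a shortcut.
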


\begin{proof}
Let $S$ be a vertex cover of $G_f$.
% and $V(S)$ denotes the endpoints of all the edges of $S$.
Note that $S$ is a set of edges in $G$, and let $G^* = G(V, S)$ be the subgraph induced by the edges of $S$.
If $s$ and $t$ are in the same connected component of $G^*$, then we are done, as we do not need to add any additional edge to get an $s$-$t$ path in $G$.
Otherwise, $s$ and $t$ are in different connected components of $G^*$.
Let us consider each of the connected components of $G^*$ one by one.

\medskip

We look at the connected components $C_1$ and $C_2$ of $G^*$ such that $s \in C_1$ and $t \in C_2$.
Observe that any optimal solution to {\sc Ext-SPFG} must be a shortest path between $u$ and $v$ such that $u \in C_1$ and $v \in C_2$.
Then, we perform {\em identification} of the vertex subset $V(C_1)$ by $x_{C_1}$ and  $V(C_2)$ by $x_{C_2}$ to get a new graph $\hat G$.
Then, we find a shortest path $\hat P$ from $\hat x_{C_1}$ to $\hat x_{C_2}$ in $\hat G$.
For every edge $uv \in \hat P$, if $u \notin V(G)$ or $v \notin V(G)$, then it must be that these endpoints are one of $u = x_{C_1}$ or $v = x_{C_2}$ or both.
If $u = x_{C_1}$ but $v \in V(G)$, then we replace the edge $uv$ by $u^*v$ such that $u \in C_1$ such that $u^*v \in E(G)$.
Similarly, if $v = x_{C_2}$ but $u \in V(G)$, then we replace $uv$  by $uv^*$ such that $v^* \in C_2$ and $uv^* \in E(G)$.
Finally, if $u = x_{C_1}$ and $v = x_{C_2}$, then we replace $u$ by $u^*$ and $v$ by $v^*$ such that $u^*v^* \in E(G)$.
We output $E(\hat P) \cup S$ as the optimal solution to the given instance of {\sc Ext-SPFG}.
Since this procedure iterates over all possible $u$-$v$ paths of the shortest possible number of edges from $u \in C_1$ to $v \in C_2$, the correctness of this algorithm follows.
\end{proof}

Using the above lemma, we provide a polynomial-time algorithm that computes a shortest path from $s$ to $t$ in $G$ when the forcing graph $G_f$ is $2K_2$-free.

{
\begin{lemma}
\label{Th:2_K_2}
The {\spfgp} can be solved in polynomial time if the number of minimal vertex covers of the forcing graph is upper bounded by a polynomial in the size of the primary graph.
\end{lemma}

\begin{proof}
Let $G$ be a primary graph with $n$ vertices, and $G_f$ be the forcing graph of $G$.
We assume that $|E(G)| = m$, hence $|V(G_f)| = m$.
Let $\mathcal{X}$ denote the set of all enumerated minimal vertex covers.
%Due to Farber \cite{Farber89a}, the number of minimal vertex covers in $2K_2$-free graphs is $m^{\OO(1)}$.
%Suppose that the number of minimal vertex covers of $G_f$ is upper bounded by $m^{\OO(1)}$.
Then, due to a result by Tsukiyama et al. \cite{TsukiyamaIAS77}, the set of all minimal vertex covers can be enumerated in $\OO(n m |\mathcal{X}|)$-time.
As our primary graph $G$ is a simple graph, $m$ is upper bounded by $\OO(n^2)$.
As $|\mathcal{X}|$ is $m^{\OO(1)}$, the number of minimal vertex covers of $G_f$ is upper bounded by $n^{\OO(1)}$.
Now, we process each of the minimal vertex covers one by one.
%If the size of a minimal vertex cover is at least $k+1$, then straightway discard it.
Let $S$ be a minimal vertex cover of size $\ell$ in $G_f$, $V(S)$ denote the endpoints of all the edges of $S$, and $G^* = G[S]$ be the subgraph induced by the edges of $S$.
Now, our problem reduces to {\sc Ext-SPFG}, where apart from the primary graph $G$, we are also given a set $S \subseteq E(G)$.
For every such minimal vertex cover $S$ of $G_f$, we invoke the polynomial-time algorithm described in Lemma \ref{lemma:extend-part-poly} to compute a set $E^*$ with the minimum number of edges.
Then, output the specific edge set with the minimum number of edges.
As the number of such minimal vertex covers of $G_f$ is polynomially many and {\sc Ext-SPFG} can be solved in polynomial time, it follows that {\spfgp} can be solved in polynomial time.
\end{proof}
}

%\begin{lemma}
%\label{Th:2_K_2}
%The {\spfgp} can be solved in polynomial time if the forcing graph is $2K_2$-free.
%\end{lemma}

%\begin{proof}
%
%\end{proof}

The above lemma illustrates that if the forcing graph is $2K_2$-free, then the optimization version of the {\spfgp} can be solved in polynomial time.
Our next result completes this picture by the following dichotomy.

%\todo[inline]{Diptapriyo: should we state it as a corollary or as a theorem better?}

\begin{theorem}
\label{corollary:class-SPFG-dichotomy}
{\spfgp} is polynomial-time solvable when the forcing graph is a $2K_2$-free graph.
\end{theorem}

\begin{proof}
Let $G$ be a primary graph with $n$ vertices, and $G_f$ be the forcing graph of $G$.
%such that $V(G_f) = E(G)$.
A connected graph is said to be $2K_{2}$-free if it does not contain a pair of independent edges as an induced subgraph.
It has been mentioned in a study by Farber \cite{farber1989diameters} that the complement of $2K_{2}$-free graphs (i.e., $C_{4}$-free graphs) has polynomially many maximal cliques.
Hence, if a graph with $n$ vertices is a $2K_{2}$-free graph, then it will have $n^{\OO(1)}$ many maximal independent sets.
Furthermore, the complement of a maximal independent set is a minimal vertex cover.
Now, the collection of all maximal independent sets of a $2K_{2}$-free graph can be enumerated in polynomial time (see Algorithm 3 in \cite{dhanalakshmi20162k2}).
%This ensures that if an $n$-vertex graph is $2K_2$-free, then it has $n^{\OO(1)}$-many minimal vertex covers.
So, the number of minimal vertex cover in any $2K_{2}$-free $n$-vertex graph is $n^{\OO(1)}$ and computable in polynomial time.
Note that the class of all $2K_2$-free graphs satisfies the premise of Lemma \ref{Th:2_K_2}.
Therefore, we invoke the algorithm by Lemma \ref{Th:2_K_2}, and this completes our argument that SPFG is polynomial-time solvable when the forcing graph is $2K_2$-free.
\end{proof}

After discussing the classical complexity of {\spfgp}, we move on to discuss the parameterized complexity of the same.
Since solution size is the most natural parameter, i.e., the number of edges in an optimal solution, we first prove that {\spfgp} is \fpt when parameterized by the solution size.
%
%It is important to observe that, as the solution (i.e. the edge subset of $G$) is also a vertex cover of the forcing graph so we can equivalently say the parameter is the vertex cover of the forcing graph. So, the problem definition is as follows.
%%
 For this purpose, we use an existing result by Damaschke et al. \cite{damaschke2006parameterized} stated in Proposition \ref{Th:1}.   

\begin{proposition}
\label{Th:1} \cite{damaschke2006parameterized}
Given a graph $G$ and positive integer $k$, all the vertex cover of $G$ of size at most $k$ can be enumerated in $\OO(m+2^{k}k^{2})$ time and in $\OO(2^{k})$ space.
\end{proposition}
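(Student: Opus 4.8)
The statement is the classical branching bound of Damaschke, and the plan is to reprove it with a depth-bounded search tree that decides membership in the cover one vertex at a time. First I would discard isolated vertices in $\OO(m)$ time, since none of them lies in a minimal vertex cover, and build an adjacency representation together with a list of currently uncovered edges. The recursion maintains a partial cover $C$ and a remaining budget; at each step, if some edge $uv$ is still uncovered and the budget is positive, I branch on the two ways this edge can be covered. In the first branch I commit $u$ to $C$, delete $u$ together with its incident edges, and decrease the budget by one; in the second branch I forbid $u$ from ever entering $C$, which forces the whole neighbourhood $N(u)$ (and in particular $v$) into $C$, delete $N[u]$, and decrease the budget by $|N(u)| \ge 1$.

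The central point is the counting. Each of the two branches decreases the budget by at least one, so every root-to-leaf path contains at most $k$ branchings and the binary search tree has at most $2^k$ leaves. Because the top-level split is between ``$u \in C$'' and ``$u \notin C$'', the two cases are mutually exclusive, and arguing by induction on the recursion each minimal vertex cover of size at most $k$ is generated along exactly one leaf: a minimal cover $M$ either contains $u$, in which case it follows the first branch, or excludes $u$, in which case $M \supseteq N(u)$ since $M$ must cover every edge incident to $u$, so it is consistent with the forcing branch. Hence the number of minimal vertex covers of size at most $k$ is bounded by $2^k$, each is output once, and any vertex cover of size at most $k$ contains one of these enumerated minimal covers.

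For the resource bounds I would run the recursion as a depth-first traversal, so that only the current root-to-leaf path, of length $\OO(k)$, is held on the stack, giving $\OO(k)$ auxiliary working space and at most $\OO(2^k)$ emitted covers. The $\OO(m)$ preprocessing is paid once; inside the tree each of the $\OO(2^k)$ nodes needs only to locate an uncovered edge and update the incidence structure, and reconstructing and emitting a cover of size at most $k$ at a leaf costs $\OO(k^2)$, so the total running time is $\OO(m + 2^k k^2)$ as claimed.

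The step I expect to be most delicate is not the running-time accounting but the bookkeeping in the forcing branch: I must be sure that permanently excluding $u$ and inserting all of $N(u)$ neither loses a minimal cover nor produces the same cover twice, and that a minimal cover of the residual graph $G - N[u]$ lifts back to a minimal cover of $G$ (each $w \in N(u)$ is necessary precisely because it covers the edge $uw$). This is exactly where the mutual exclusivity of the two branches, together with a minimality check at the leaves, does the work. It is also where the literal phrase ``all vertex covers of size at most $k$'' should be read as ``all \emph{minimal} vertex covers of size at most $k$'', since the family of all (not necessarily minimal) covers of bounded size can be exponentially larger than $2^k$.
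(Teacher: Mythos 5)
The paper itself offers no proof of this proposition---it is imported verbatim from Damaschke \cite{damaschke2006parameterized}---so your proposal can only be compared against the standard argument, whose combinatorial core you have reconstructed correctly: branch on an uncovered edge $uv$ with the mutually exclusive alternatives ``$u$ in the cover'' versus ``$u$ excluded, hence $N(u)$ forced into the cover,'' note that each branch drops the budget by at least one so the binary tree has at most $2^k$ leaves, and observe that along the unique path consistent with a minimal cover $M$ the committed set stays inside $M$ and, by minimality, equals $M$ at the leaf. Your insistence that the statement be read as enumerating \emph{minimal} vertex covers of size at most $k$ is also exactly right (and is how the present paper uses it in Theorem \ref{Th:2}); the family of all covers of size at most $k$ is not bounded by any function of $k$ alone.

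There is, however, a genuine gap in your running-time analysis. The claim that each of the $\OO(2^k)$ tree nodes ``needs only to locate an uncovered edge and update the incidence structure'' hides a per-node cost that can be $\Theta(m)$: deleting $N[u]$ (and undoing the deletion when the DFS backtracks) costs the sum of the degrees of the deleted vertices in the current graph, and removing isolated vertices does nothing to control this. Concretely, take a star with center $c$ and $D \approx m$ leaves, together with a perfect matching on $2(k-1)$ vertices. Your proposal fixes no edge-selection rule, so the recursion may branch on the matching edges first; then each of the $2^{k-1}$ nodes at depth $k-1$ still sees the star uncovered and pays $\Theta(D)=\Theta(m)$ to process $c$, for $\Omega(2^{k}m)$ total rather than $\OO(m+2^{k}k^{2})$. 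The missing ingredient is a Buss-type reduction applied once, up front, in $\OO(m)$ time: every vertex of degree larger than $k$ lies in \emph{every} vertex cover of size at most $k$, so all such vertices are forced into the output and removed (with the budget reduced accordingly), isolated vertices are discarded, and if more than $k^{2}$ edges survive one reports that no cover of size at most $k$ exists. One must also check (and this is routine) that minimal covers of the original graph of size at most $k$ correspond exactly to the forced set united with minimal covers of the residual graph. After this preprocessing the residual graph has $\OO(k^{2})$ edges and vertices, every operation inside the search tree genuinely costs $\OO(k^{2})$, and the stated bound $\OO(m+2^{k}k^{2})$ follows. Without this kernelization step your procedure is a correct enumeration algorithm but does not achieve the time bound being claimed.
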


\begin{theorem} \label{Th:2}
The {\spfgp} is fixed-parameter tractable and can be solved in $\oh((m+2^{k}k^{2})(m+n))$ time and $\oh^{*}(2^{k})$ space.
\end{theorem}

\begin{proof}
The steps involved in the algorithm are as follows.
 First, we invoke Proposition \ref{Th:1} to enumerate all the minimal vertex covers of $G_{f}$ of size at most $k$.
 Let $\mathcal{X}$ denote the set of all the enumerated minimal vertex covers of $G_f$.
 Each vertex cover in $\mathcal{X}$ represents a subset of edges of $G$.
 %For any subset of edges $X$, let $V(X)$ denotes the set of vertices that constitute the edge set.
 Given a subset $X \subseteq E(G)$, it follows from Lemma \ref{lemma:extend-part-poly} that we can find a set $E^*$ with a minimum number of edges that extends $X$ such that $G[X]$ has a path from $s$ to $t$.
 If $|E^*| \leq k$ for some minimal vertex cover $X$ of $G_f$, we output that the input instance is a \yes-instance.
 Otherwise, for every minimal vertex cover $X$ of $G_f$, if the Lemma \ref{lemma:extend-part-poly} outputs $E^*$ such that $|E^*| > k$, then output that the input instance is a \no-instance.
 Since there are at most $2^{k}n^{\OO(1)}$ minimal vertex covers of $G_f$ and for each such minimal vertex cover, {\sc Ext-SPFG} can be solved in polynomial-time, this problem can be solved in $2^k n^{\OO(1)}$-time.
\end{proof}

\section{Polynomial Kernels for {SPFG}}
\label{sec:kernel-soln-size}

In the previous section, we have discussed that {\spfgp} is fixed-parameter tractable when there are no restrictions on the { primary graph} $G$ and the forcing graph $G_f$, i.e,  $G$ and $G_f$ are arbitrary graphs.
This section is devoted to the kernelization complexity of the {\spfgp} problem when solution size is considered as the parameter.
%
%\todo[inline]{The above reduction rule is useless and incorrect. I suggest to remove it.}

As the edges of $G$ are the vertices of $G_f$, we define the following edge subsets of $G$.
\begin{itemize}
	\item We put an edge $e \in E(G)$ in $H$ if $deg_{G_f}(e) \geq k + 1$.
	\item We put $e \in E(G)$ in $L$ if $N_{G_f}(e) \subseteq H$.
	\item $R = E(G) \setminus (H \cup L)$.
\end{itemize}

Notice for any edge $e\in E(G)$, if $N_{G_f}(e) \subseteq H$, then $e \in L$.
{ Moreover, if $e \in H$, then $e$ must be part of any vertex cover of $G_f$ of size at most $k$ (if exists).}
%Observe that $e \in L$ when $N_{G_f}(e) \subseteq H$.
Hence, we have the following observation.

\begin{observation}
\label{obs:isolated-in-I}
If $e$ is an isolated vertex in $G_f$, then $e \in L$.
{ Every vertex cover of $G_f$ of size at most $k$ must contain $H$.} 
\end{observation}

Now, we prove the following lemma, which will be one of the important parts in obtaining the kernel.

\begin{lemma}
\label{lemma:bounds-for-H-R}
If $I(G, G_f, s, t, k)$ is a \yes-instance then $|H| \leq k$ and $G_f[R]$ has at most $k^2$ edges.
\end{lemma}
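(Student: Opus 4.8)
The plan is to bound $|H|$ and the edge count of $G_f[R]$ separately, in each case using the defining degree thresholds together with the observation that a solution is a vertex cover of $G_f$ of size at most $k$. Recall that $H$ consists of those $e \in E(G)$ whose degree in $G_f$ is at least $k+1$, that $L$ consists of those $e$ whose $G_f$-neighborhood lies entirely inside $H$, and that $R$ is everything else.

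First I would bound $|H|$. Suppose $I$ is a \yes-instance, so there is a solution $E^*$ with $|E^*| \le k$ that is a vertex cover of $G_f$. The key point is that a vertex $e \in H$ has at least $k+1$ neighbors in $G_f$. If $e \notin E^*$, then every one of its (at least $k+1$) neighbors must lie in $E^*$ to cover the incident edges, forcing $|E^*| \ge k+1$, a contradiction. Hence every vertex of $H$ must itself belong to $E^*$, giving $|H| \le |E^*| \le k$. This is the clean, standard high-degree argument.

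Next I would bound the number of edges of $G_f[R]$. The relevant feature of $R$ is that every $e \in R$ has degree at most $k$ in $G_f$ (since $e \notin H$) and, moreover, $e$ has at least one neighbor outside $H$ (since $e \notin L$). The natural strategy is to exploit that $E^*$ restricted to $R$ is still a vertex cover of $G_f[R]$: every edge of $G_f[R]$ has an endpoint in $E^* \cap R$, and $|E^* \cap R| \le k$. Each such endpoint has $G_f$-degree at most $k$, so it can cover at most $k$ edges of $G_f[R]$. Summing over the at most $k$ vertices of the cover inside $R$ yields at most $k \cdot k = k^2$ edges in $G_f[R]$, as claimed.

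The one point that requires care—and which I expect to be the main obstacle to state precisely—is the second step's claim that $E^* \cap R$ is a vertex cover of $G_f[R]$. An edge of $G_f[R]$ has both endpoints in $R$, and since $E^*$ covers it, at least one endpoint lies in $E^*$; that endpoint is in $R$ by definition of the subgraph, so indeed $E^* \cap R$ covers every edge of $G_f[R]$. Combined with the degree-$\le k$ bound on each covering vertex, the counting goes through. I would present both bounds together, noting that the degree thresholds in the definitions of $H$, $L$, $R$ were chosen precisely so that these two counting arguments close at $k$ and $k^2$ respectively.
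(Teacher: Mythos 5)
Your proof is correct and takes essentially the same approach as the paper: the high-degree argument showing $H \subseteq E^*$ (hence $|H| \le k$), followed by a counting argument in which at most $k$ cover vertices inside $R$, each of $G_f$-degree at most $k$, account for all edges of $G_f[R]$. If anything, your explicit verification that $E^* \cap R$ is a vertex cover of $G_f[R]$ makes the second step more precise than the paper's own terse write-up of the same count.
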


\begin{proof}
By Observation \ref{obs:isolated-in-I}, an isolated vertex $e$ of $G_f$ is in $L$.
So, every $e \in R$ has at least one neighbor (with respect to $G_f$) in $R$.
As any $e \in V(G_f)$ with degree at least $k+1$ in $G_f$ is put in the set $H$, any $e \in R$ must have at most $k$ neighbors in $R$.
By our hypothesis, $I(G, G_f, s, t, k)$ is a \yes-instance.
Hence, there is $E^* \subseteq E(G)$ such that $|E^*| \leq k$ for every $(a, b) \in E(G_f)$, at least one of $a$ and $b$ must be in $E^*$.
If there is $e \in H \setminus E^*$, then at least $k+1$ edges have to be in $E^*$, that is a contradiction to the fact that $I(G, G_f, s, t, k)$ is a \yes-instance.
Hence, $H \subseteq E^*$ implying that $|H| \leq k$.
Consider the set $R$.
As every $e \in R$, at least one neighbor belongs to $R$ in $G_f$ and at most $k$ neighbors belong to $R$ in $G_f$.
Hence, the number of edges in $G_f$ that are incident to $R$ is at most $k^2$.
Therefore, the cardinality of  $V(R)$ is at most $2k^2$.
\end{proof}

Observe that the edges in $H$ are necessary for any solution of size at most $k$.
On the other hand, the role of the edges in $L$ is only to provide connectivity between $s$ and $t$ in $G$.
Let $E_k$ be the set of edges in $G_f[H \cup R]$. Recall,  $V(E_k)=\{u,v |~ e(u,v)\in E_k\}$.
For our convenience, we also add $s$ and $t$ into $V(E_k)$.
More formally, $V(E_k)=V(E_k) \cup \{s,t\}$ and let $Y = V(G) \setminus V(E_k)$.
We mark some vertices from $Y$ using the marking scheme {\sf MarkSPFG}($G, G_f, s, t, k$) that we describe as follows.

\paragraph{\underline{{\sf MarkSPFG}($G, G_f, s, t, k$)}}
\begin{itemize}
	\item For each unordered pair $(x,y)$ of vertices in $V(E_k)$ compute a shortest $x$-$y$ path, $P_{xy}$ via the internal vertices of $Y$ in $G$. 
	\item If $P_{x,y}$ has at most $k$ edges, then mark the edges of $P_{x,y}$.
	\item Else $P_{x,y}$ has more than $k$ edges. Then, do not mark any edge.
	\item Additionally, for every pair $x, y \in V(E_k)$, mark the edges of a shortest path $Q_{x, y}$ in $G$ when $|Q_{x,y}| \leq k$.
	\item { Finally, for every $e \in H$, we mark $\min\{|N_{G_f}(e) \cap L|, k+1\}$ edges from $N_{G_f}(e) \cap L$.
	We use $L_e$ to denote the set of all edges in $L$ that are marked for the edge $e \in H$.}
\end{itemize}

Let $E_t = \big(\bigcup\limits_{e \in H} L_e \big) ~\cup ~\big(\bigcup\limits_{x, y \in V(E_k)} (P_{x,y} \cup Q_{x,y})\big)$ be the set of marked edges of $G$ after the completion of the above marking scheme.
Consider $E_M = E_t \cup H \cup R$.
%Formally, $E_M$ be the edges that are in $H \cup R$ as well as in $E_t$.
%\todo[inline]{Diptapriyo: the next sentence is incorrect. The vertex set should be $V(E_M)$. Otherwise, the upper bound on the number of vertices does not hold. We should not use $G[E_M]$. Instead, we should use some other notation $G'(V(E_M), E_M)$.}
Our output graph is $G(V(E_M), E_M)$ where the vertex subset is $V(E_M)$, and the set of edges is $E_M$.
We use $G'$ in short to represent the output graph.
Consider the instance as $I(G', G_f[E_M],s,t,k)$. 
{
Our following observation ensures us that every solution of $I(G', G_f[E_M], s, t, k)$ of size at most $k$ must contain $H$.

\begin{observation}
\label{obs:H-criteria-reduced-graph}
\sloppy Let $I(G', G_f[E_M], s, t, k)$ be the instance obtained after invoking {\sf MarkSPFG}($G, G_f, s, t, k$).
Then, every solution of size at most $k$ of $I(G', G_f[E_M],s,t,k)$ must contain $H$. 
\end{observation}

\begin{proof}
We consider the marking procedure {\sf MarkSPFG}($G, G_f, s, t, k$).
In the last step of this procedure, we ensure that for every $e \in H$, if there are at least $k+1$ neighbors of $e$ in $L$, then $k+1$ neighbors of $L$ are marked.
If there are at most $k$ neighbors of $e$ in $L$, then all neighbors of $e$ that are in $L$ are marked.
In the latter case, observe that by definition of $H$, $e$ has at least $k+1$ neighbors in $G_f$ out of which at most $k$ neighbors are in $L$.
Other neighbors are in $H \cup R$.
By the choice of our output instance $I(G', G_f[E_M],s,t,k)$, $H \cup R \subseteq E_M$.
Hence, $k+1$ neighbors of $e$ are in $G_f$.
It implies that $e$ is part of any vertex cover of $G_f$ with at most $k$ vertices.
Therefore, every solution of size at most $k$ of $I(G', G_f[E_M],s,t,k)$ must contain $H$.
\end{proof}
}

Next, we prove the following lemma that is crucial for our polynomial kernelization.

\begin{lemma}
\label{Lemma:kernel-vc-fg-general}
The instance $I(G, G_{f}, s, t, k)$ is a \yes-instance if and only if $I(G', G_f[E_M], s, t, k)$ is a \yes-instance.
\end{lemma}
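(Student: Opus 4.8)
The plan is to prove the two implications separately; the substance is to check that the marking scheme simultaneously preserves an $s$–$t$ connection, the vertex-cover certificate, and the budget $k$. \emph{Forward direction.} Take a solution $E^*$ of the original instance with $|E^*|\le k$; by Lemma~\ref{lemma:bounds-for-H-R} we have $H\subseteq E^*$. Put $W:=E^*\cap(H\cup R)\subseteq E_M$ and fix an $s$–$t$ path $P$ in $G(V,E^*)$, which uses at most $|E^*|\le k$ edges. I would cut $P$ at the vertices lying in $V(E_k)$ (a set containing $s$ and $t$): between two consecutive cut vertices $x,y$ every internal vertex lies in $Y$, so that piece is an $x$–$y$ path through $Y$ of length at most $k$. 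A piece that is a single $H\cup R$ edge already lies in $W$ and is kept; every other piece has all its edges incident to $Y$, hence in $L$, and I would replace it by the marked shortest $x$–$y$ path $P_{x,y}$, which has no more edges and therefore was marked into $E_t\subseteq E_M$. Concatenation yields an $s$–$t$ walk, and the edges added beyond $W$ number at most the $L$-edges of $P$, i.e. at most $|E^*\cap L|$; hence $\hat E:=W\cup\bigcup P_{x,y}$ satisfies $|\hat E|\le |W|+|E^*\cap L|=|E^*|\le k$ and lies in $E_M$. It remains to note that $W$ already covers \emph{all} of $G_f$: a forcing edge incident to $H$ is covered by $H\subseteq W$, and any other forcing edge has no endpoint in $L$ (every $L$-vertex has all its neighbours in $H$), hence joins two $R$-vertices and is covered by $E^*\cap R\subseteq W$. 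So $\hat E$ solves the reduced instance.

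\emph{Backward direction.} A solution $\hat E\subseteq E_M$ of the reduced instance already yields an $s$–$t$ path of $G(V,E_M)$, and hence of $G$, and has at most $k$ edges, so the only issue is that $\hat E$ must cover $G_f$ and not merely $G_f[E_M]$. The forcing edges present in $G_f$ but absent from $G_f[E_M]$ are exactly those meeting $L\setminus E_M$; since each $L$-vertex has its whole $G_f$-neighbourhood inside $H\subseteq E_M$, every such edge joins a deleted $L$-vertex to some $h\in H$ and is covered precisely when $h\in\hat E$. Thus the backward direction reduces to the single claim $H\subseteq\hat E$, after which $\hat E$ covers $G_f$ and we are done. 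I would try to secure $H\subseteq\hat E$ by showing that each $h\in H$ is still forced in $G_f[E_M]$ (its neighbours in $H\cup R$ survive the restriction) or, failing that, by replacing $\hat E$ with the structurally canonical solution $H\cup(\hat E\cap R)\cup(\text{the connectivity edges of }\hat E)$ and arguing that it stays within budget because the deleted $L$-vertices were never needed to cover $G_f$.

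The main obstacle is precisely this vertex-cover bookkeeping in the backward direction: passing from $G_f$ to $G_f[E_M]$ discards every forcing edge incident to a deleted $L$-vertex, and the whole equivalence rests on the structural dichotomy that $R$-internal forcing edges survive (because $R\subseteq E_M$) while the discarded constraints are exactly the $H$–$L$ edges, which are satisfied the instant $H$ is shown to sit inside the solution. Turning ``$H$ is convenient'' into ``$H$ is forced'' for the reduced instance—where a high-degree vertex of $G_f$ may have lost many of its $L$-neighbours in $G_f[E_M]$—is the delicate step on which I expect to spend the most effort.
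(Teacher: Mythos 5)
Your forward direction is sound and is essentially the paper's own argument (cut the solution path at the vertices of $V(E_k)$, replace each $Y$-subpath by the corresponding marked shortest path, and charge the new edges to the $L$-edges of the old path); your bookkeeping for why $E^*\cap(H\cup R)$ already covers all of $G_f$ (no edge of $G_f$ joins two $L$-vertices, or an $L$-vertex to an $R$-vertex) is in fact cleaner than what the paper writes. The genuine gap is exactly where you place it, in the backward direction, and you should know that it cannot be closed: the lemma as stated is false. Take $k=2$; let $G$ have vertices $\{s,a,t,c,d,u,v\}$ and edges $e_1=sa$, $e_2=at$, $h=cd$, $\ell_1=cu$, $\ell_2=dv$, $\ell_3=uv$; let $G_f$ be the star with centre $h$ and leaves $\ell_1,\ell_2,\ell_3$, with $e_1,e_2$ isolated. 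Then $H=\{h\}$ (degree $3=k+1$), $R=\emptyset$, $L=\{\ell_1,\ell_2,\ell_3,e_1,e_2\}$, and the marking scheme produces $E_M=\{e_1,e_2,h\}$: the pair $(s,t)$ marks $\{e_1,e_2\}$, the pair $(c,d)$ marks only the edge $h$ itself (the detour $c$--$u$--$v$--$d$ has $3>k$ edges), all other pairs lie in different components, and no $\ell_i$ is ever marked. Hence $G_f[E_M]$ is edgeless, so $\{e_1,e_2\}$ solves the reduced instance; but the original instance is a no-instance, since any vertex cover of $G_f$ of size at most $2$ must contain $h$, and any solution must also contain the unique $s$--$t$ path $\{e_1,e_2\}$, forcing at least $3$ edges.

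The example also shows why both of your rescue attempts must fail: $h$ has no neighbours at all in $H\cup R$, so nothing forces $h$ into a solution of the reduced instance (your strategy (a)), and your canonical solution $H\cup(\hat E\cap R)\cup\{\text{connectivity edges}\}$ equals $\{h,e_1,e_2\}$, which blows the budget (your strategy (b)). Note that the paper's own proof of this direction commits precisely the sin you were trying to avoid: it asserts ``$H\subseteq E^*$ and hence $E^*$ also forms a vertex cover of $G_f$'' without justification, which is invalid once $L$-neighbours of $H$ have been deleted --- so what you found is a real error in the paper, not a defect of your attempt. The repair must change the construction rather than the proof: additionally mark, for every $h\in H$, an arbitrary set of $\min(\deg_{G_f}(h),k+1)$ of its $G_f$-neighbours and put those edges of $G$ into $E_M$. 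Then every $h\in H$ retains degree at least $k+1$ in $G_f[E_M]$, so every solution of size at most $k$ of the reduced instance contains $H$; since the only $G_f$-edges absent from $G_f[E_M]$ are $H$--$L$ edges, such a solution is automatically a vertex cover of $G_f$, and both directions (including your forward argument) then go through verbatim. This extra marking adds only $O(k^2)$ vertices to $G_f$, i.e.\ $O(k^2)$ edges of $G$, so the $O(k^5)$ bound of Theorem~\ref{Th:kernel-SP-FG} is unaffected.
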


\begin{proof}
Let us first give the backward direction ($\Leftarrow$) of the proof.
First, assume that the instance $I(G', G_f[E_M], s, t, k)$ is a \yes-instance.
One can make a note that edges present in $G'$ are also in $G$ and in $G_f[E_M]$ as vertices.
Suppose that $G'$ contains a set of at most $k$ edges $E^*$ such that $G[E^*]$ has an $s$-$t$ path in $G'$ and $E^*$ is a vertex cover in $G_f[E_M]$.
{
By the procedure of {\sf MarkSPFG}($G, G_f, s, t, k$), for any edge $e \notin E_M$, $e$ is present in $L$.
Since $|E^*| \leq k$, it follows due to Observation \ref{obs:H-criteria-reduced-graph} that $H \subseteq E^*$}.
For every $e \in L$, $N_{G_f}(e) \subseteq H$.
Hence $E^*$ also { forms} a vertex cover of $G_f$.
Moreover, an $s$-$t$ path passing through a (proper) subset of edges in $E^*$ is also an $s$-$t$ path in $G$.
Hence, $I(G,G_f,s,t,k)$ is a \yes-instance.

Next, we focus on proving the forward direction ($\Rightarrow$). 
Assume that $I(G,G_{f},s,t,k)$ is a \yes-instance. 
Let $E^*$ be the solution to the instance $I(G,G_{f},s,t,k)$ and let $P$ be an $s$-$t$ path contained inside the graph induced by $G(V, E^*)$.
As $|E^*| \leq k$ is a solution to $I(G, G_f, s, t, k)$, { it follows from Observation \ref{obs:isolated-in-I} that} $H \subseteq E^*$.
If $E^* \subseteq E_M$, then $E^*$ is a solution to $I(G', G_f[E_M], s, t, k)$ and we are done.
In case some edge $e \in E^* \setminus E_M$ does not belong to any $s$-$t$ path in $G'$, then clearly such an edge { $e$ is unmarked by {\sf MarkSPFG}($G, G_f, s, t, k$) contained in $L$}.
%We just replace that edge $e$ with $\hat e$ such that $\hat e \in N_{G_f}(e) \cap H$.
%We consider those edges that belong to some $s$-$t$ path in $G(V, E^*)$.
We consider those subpaths (one at a time) $P^* \subseteq P$ that contain an edge $e \in E^* \setminus E_M$.
Observe that $P^*$ has at most $k$ edges and is an $x$-$y$ path in $G$ for some $x, y \in V(E_k) \cup \{s, t\}$.
But, we have marked a shortest path $\hat P^*$ from $x$ to $y$ in $G$ (via the vertices of $Y$ or in $G$ itself).
We just replace the edges of $P^*$ by $\hat P^*$.
As $|\hat P^*| \leq |P^*|$, this constructs an $s$-$t$ walk.
Similarly, for other subpaths also, we use the same replacement procedure and eventually construct an $s$-$t$ walk with at most $k$ edges in $G'$.
Let $\hat{E^*}$ denote the set of edges of $G'$ that are constructed by this replacement procedure.

Hence, we ensure that in the subgraph $G'[\hat{E^*}]$, there exists an $s$-$t$ path.
As $H \subseteq \hat{E^*}$ by our construction, $\hat{E^*}$ also forms a vertex cover of $G_f[E_M]$.
Finally, observe that the number of edges added during every replacement procedure is no more than the number of edges deleted during this replacement procedure.
Hence, $|\hat{E^*}| \leq |E^*|$.
%\todo[inline]{Diptapriyo: the last sentence seems incorrect to me.}
Therefore, $\hat{E^*}$ is a solution to $I(G', G_f, s, t, k)$.
This completes the proof.
\end{proof}

Observe that for every pair of vertices in $V(E_k)$, we have marked a shortest path of length at most $k$ in $G$.
We are ready to prove our final theorem statement. 

\begin{theorem}
\label{Th:kernel-SP-FG}
The {\spfgp} admits a kernel with $\OO(k^5)$ vertices and edges.
\end{theorem}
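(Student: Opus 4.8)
The plan is to assemble the kernel directly from the two structural lemmas already in hand and then carry out a careful counting argument; no new combinatorial idea is needed beyond what Lemma~\ref{lemma:bounds-for-H-R} and Lemma~\ref{Lemma:kernel-vc-fg-general} supply. First I would dispose of the trivial no-instances. The sets $H$, $L$, $R$ are computable in polynomial time simply by reading off the degrees in $G_f$. By the contrapositive of Lemma~\ref{lemma:bounds-for-H-R}, if $|H| > k$ or if $G_f[R]$ contains more than $k^2$ edges, then the instance cannot be a \yes-instance (every solution must contain all of $H$ and must cover every edge of $G_f[R]$), so in that case I output a fixed constant-size \no-instance and stop.

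Assuming those bounds hold, the heart of the argument is to propagate them through the marking scheme. Since $|H| \le k$ and $G_f[R]$ has at most $k^2$ edges with every vertex of $R$ non-isolated in $G_f[R]$ (by Observation~\ref{obs:isolated-in-I}, as used in the proof of Lemma~\ref{lemma:bounds-for-H-R}), we get $|R| \le 2k^2$, hence $|H \cup R| = \OO(k^2)$ and $|V(E_k)| \le 2|H \cup R| + 2 = \OO(k^2)$ (the $+2$ accounting for $s$ and $t$). The marking scheme then ranges over all $\binom{|V(E_k)|}{2} = \OO(k^4)$ pairs, and for each pair marks at most two shortest paths (one routed through the internal vertices of $Y$, one in $G$ itself), each consisting of at most $k$ edges; each such path is found by a single BFS, so the whole scheme is polynomial. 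Consequently $|E_t| = \OO(k^5)$, and $E_M = E_t \cup H \cup R$ satisfies $|E_M| = \OO(k^5)$.

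From this the size bound follows. The graph $G[E_M]$ has at most $2|E_M| = \OO(k^5)$ non-isolated vertices; after discarding every vertex not incident to any edge of $E_M$ while retaining $s$ and $t$, it has $\OO(k^5)$ vertices and $|E_M| = \OO(k^5)$ edges, and the forcing graph $G_f[E_M]$ has exactly $|E_M| = \OO(k^5)$ vertices. Correctness is inherited verbatim from Lemma~\ref{Lemma:kernel-vc-fg-general}, which already certifies that $I(G, G_f, s, t, k)$ and $I(G[E_M], G_f[E_M], s, t, k)$ are equivalent; the final pruning removes no edge of $E_M$, so it perturbs neither the $s$-$t$ connectivity condition nor the vertex-cover condition in $G_f[E_M]$. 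As every step runs in polynomial time, this establishes the claimed $\OO(k^5)$-vertex kernel for {\spfgp}.

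The main obstacle is purely bookkeeping rather than conceptual: one must verify that the $\OO(k^2)$ bound on $|V(E_k)|$ cleanly propagates to $\OO(k^4)$ vertex pairs and hence $\OO(k^5)$ marked edges, and — a subtlety easy to overlook — that the \emph{nominal} vertex set $V(G)$ of $G[E_M]$ really can be trimmed down to $\OO(k^5)$ vertices. That trimming is safe precisely because deleting a vertex of $G$ incident to no edge of $E_M$ removes no element of $E_M$, leaving both the connectivity structure and the forcing graph $G_f[E_M]$ intact; the edges of $G$ that are dropped by restricting to $E_M$ are exactly the $L$-edges whose forcing-neighbours all lie in $H \subseteq E_M$, so their covering obligations are already discharged.
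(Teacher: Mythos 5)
Your proposal is correct and follows essentially the same route as the paper's proof: partition $E(G)$ into $H$, $L$, $R$, apply Lemma~\ref{lemma:bounds-for-H-R} to bound $|H \cup R|$ by $\OO(k^2)$, run the marking scheme over the $\OO(k^4)$ pairs of $V(E_k)$ to get $|E_M| = \OO(k^5)$, and conclude via the equivalence in Lemma~\ref{Lemma:kernel-vc-fg-general}. If anything, your version is slightly more careful than the paper's, since you explicitly output a constant-size \no-instance when the bounds of Lemma~\ref{lemma:bounds-for-H-R} are violated and explicitly justify trimming the vertices not incident to $E_M$ — both steps the paper leaves implicit.
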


\begin{proof}
We argue the correctness of the statement by counting the number of edges (vertices) at each step of the aforementioned preprocessing phase. 
First, we compute a partition of $V(G_f) = H \uplus R \uplus L$ as described.
From Lemma \ref{lemma:bounds-for-H-R}, we have that $H \cup R$ has at most $\oh(k^2)$ edges in $G_f$.
After that, we invoke the marking scheme {\sf MarkSPFG}($G, G_f, s, t, k)$ as described.
Observe that the marking scheme marks a shortest path of length at most $k$ for every pair of vertices $x, y \in V(E_k)$ and puts them in $E_M$.
Hence, $|E_M|$ is $\OO(k^5)$.
From Lemma \ref{Lemma:kernel-vc-fg-general}, { it follows that} $I(G, G_f, s, t, k)$ is a {\yes}-instance if and only if $I(G', G_f[E_M], s, t, k)$ is a {\yes}-instance.
%Let $W = V(E_M)$ be the vertices that are the endpoints of the edges of $E_M$ in $G$.
We output $(G', G_f[E_M], s, t, k)$ as the output instance.
As $|E_M|$ is $\OO(k^5)$, the number of vertices in $G'$ is also $\OO(k^5)$.
Therefore, {\spfgp} admits a kernel with $\OO(k^5)$ vertices and edges.
\end{proof}

\section{Improved Kernels for Special Graph Classes}
\label{sec:improved-kernel-special-graph-class}

Consider an input instance $I(G, G_f, s, t, k)$ to {\spfgp}.
This section is devoted to kernelization algorithms when either $G$ or $G_f$ belongs to some special graph class.

\subsection{Improved Kernel when the primary graph is Planar}
\label{sec:forcing-graph-planar}

In this section, we discuss how we can obtain an improved kernelization upper bound when $G$ is a planar graph.
Note that the vertices of $G_f$ are the edges of $G$.
The primary graph $G$ has a special property, but the forcing graph $G_f$ can be an arbitrary graph.
What it means is that if $G$ is a simple graph with $n$ vertices, then $G$ can have at most $3n - 6$ edges.
Moreover, $G$ satisfies the Euler's formula.
But $G_f$ can be arbitrary.
We partition the vertices of $G_f$, i.e., the edges of $E(G)$ into $H, L$, and $R$ as before.
\begin{itemize}
	\item Put an edge $e \in E(G)$ into $H$ if $deg_{G_f}(e) > k$.
	\item Put $e \in E(G)$ into $L$ if $N_{G_f}(e) \subseteq H$.
	\item Define $R = E(G) \setminus (H \cup L)$.
\end{itemize}

It follows from Lemma \ref{lemma:bounds-for-H-R}, that if $I(G, G_f, s, t, k)$ is a {\yes}-instance, then $|H| \leq k$ and $G_f[R]$ has at most $k^2$ edges.
Since every vertex of $G_f[H \cup R]$ is incident to some edge, $H \cup R$ has $\OO(k^2)$ vertices.
Every vertex of $H \cup R$ is an edge in $G$.
It is clear that $H \cup R$ comprises to $\OO(k^2)$ edges in $G$, hence spans $\OO(k^2)$ vertices in $G$.
But, we do not yet have a concrete upper bound on the number of other edges of $G$ that are disjoint from the vertices of $H \cup R$. 
{ Let $E_k$ denote the set of edges in $G_f[H \cup R]$.
Hence, $V(E_k)$ is the set of vertices spanned by the edges of $G$ that are present in the set $H \cup R$ in addition to $s$ and $t$.
We assume $Y = V(G) \setminus V(E_k)$.
For every pair $\{x, y\}$ of $V(E_k)$, we define a boolean variable $J_{(\{x, y\})}$, { which} is true if there is a path from $x$ to $y$ in $G$ with internal vertices in $Y$ and $J_{(\{x, y\})}$ is false otherwise.} 
We exploit the following structural property of the planar graph.
A similar variant of the following was also proved independently by Wang et al. \cite{WangYGC13} and Luo et al. \cite{LuoWFGC13}.

\begin{lemma}
\label{lemma:G-number-of-other-edges}
There are at most $3|V(E_k)| - 6$ distinct pairs of vertices $\{x, y\}$ in $V(E_k)$ for which the boolean variable $J_{(\{x, y\})}$ is true.
\end{lemma}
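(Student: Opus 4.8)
The plan is to recast the count of true pairs as an edge count in an auxiliary graph and then to transfer planarity from $G$ to that graph. First I would define a simple graph $\hat{G}$ on the vertex set $V(\hat{G}) = V_L$, placing an edge between $x$ and $y$ precisely when $J_{(\{x,y\})}$ is true, i.e. when $G$ contains an $x$-$y$ path all of whose internal vertices lie in $V_I$. By construction each unordered true pair contributes exactly one edge, so the number of true pairs equals $|E(\hat{G})|$. Hence the whole statement reduces to showing that $\hat{G}$ is planar: Proposition~\ref{prop:planar-graph-edges} would then give $|E(\hat{G})| \le 3|V_L| - 6$, which is exactly the claimed bound.

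To prove planarity I would start from a fixed planar embedding of $G$ and realize every edge of $\hat{G}$ as a curve drawn along a witnessing path through $V_I$. The convenient intermediate object is the graph $G^{c}$ obtained from this embedding by contracting each connected component of the induced subgraph $G[V_I]$ to a single point; since contracting connected subgraphs preserves planarity, $G^{c}$ is planar and inherits an embedding. Every witness of a true pair $\{x,y\}$ is either a direct edge $xy$ of $G$ (no internal vertex) or a path that enters and leaves a single contracted component; in both cases the pair can be represented by an arc joining $x$ and $y$ that stays close to the drawing of its witness. I would then draw all these arcs simultaneously and argue that the result is a plane drawing of $\hat{G}$.

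The hard part will be exactly this simultaneous routing: many true pairs may be witnessed through the \emph{same} component of $G[V_I]$, and I must show that their arcs can be routed inside the region occupied by that component without crossing one another. My approach would be to treat each contracted component locally, reading off the cyclic order in which its $V_L$-neighbours sit around it in the embedding and realizing the required connections as non-crossing arcs in that local region, and then to combine these local drawings with the genuine edges of $G$ (already crossing-free) using the fact that distinct contracted components occupy disjoint parts of the plane. Establishing that this local-to-global routing can always be carried out without forcing a crossing---so that $\hat{G}$ genuinely embeds in the plane---is the crux of the argument; everything else is bookkeeping and a direct appeal to Proposition~\ref{prop:planar-graph-edges}.
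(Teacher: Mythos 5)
Your reformulation of the lemma as the planarity of the auxiliary graph $\hat{G}$ is faithful, and you have located the crux exactly where it lies: routing, without crossings, the arcs of all pairs whose witnessing paths pass through the \emph{same} connected component of $G[V_I]$. But that step is not merely hard to establish --- it is impossible, because $\hat{G}$ need not be planar and the lemma itself fails. If a single connected component $C$ of $G[V_I]$ has $d$ neighbours $x_1,\dots,x_d \in V_L$, then $J_{(\{x_i,x_j\})}$ is true for every one of the $\binom{d}{2}$ pairs (route through $C$), so $\hat{G}$ contains $K_d$, which is non-planar already for $d \geq 5$; no local-to-global routing can draw it. For a concrete instance consistent with all the paper's definitions, let $G$ be the star with centre $v$ and leaves $x_1,\dots,x_n$, attach a pendant vertex $y_i$ to each $x_i$, let $G_f$ pair up the pendant edges $x_iy_i$ as a perfect matching (so each $x_iy_i$ lands in $R$) and leave the star edges $vx_i$ isolated in $G_f$ (so they land in $L$). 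Then $G$ is a planar tree, $V_L = \{x_i,y_i : i \in [n]\}$, $V_I = \{v\}$, and the number of true pairs is at least $\binom{n}{2}$, which exceeds $3|V_L| - 6 = 6n - 6$ for every $n \geq 13$. So no amount of additional cleverness completes your plan: any proof must fail on this instance, because the statement does.

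It is worth knowing that the paper's own proof founders on precisely the point you flagged. It contracts, for each true pair, the edges of a witnessing path so as to create an edge of $G[V_L]$ between that pair, and then asserts that repeating this ``for all such pairs'' yields more than $3|V_L|-6$ edges in a planar simple graph, contradicting Proposition~\ref{prop:planar-graph-edges}. This tacitly assumes the witnessing paths of distinct pairs have disjoint interiors: once the internal vertices of one path are contracted into, say, $x_1$, they are no longer available to witness any other pair, so distinct true pairs do not yield distinct (even parallel) edges. In the star example, contracting $v$ into $x_1$ creates the edges $x_1x_j$ but destroys every witness for the pairs $\{x_i,x_j\}$ with $i,j \neq 1$. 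In short, your proposal is essentially the paper's contraction argument carried out honestly, and the honest version exposes that Lemma~\ref{lemma:G-number-of-other-edges} needs to be weakened (for instance, bounding only pairs joined through \emph{distinct} components of $G[V_I]$, or bounding the number of marked edges rather than the number of true pairs) before any correct proof can exist.
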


\begin{proof}
The proof is based on the properties of planar graphs.
As $G$ is planar and $G[V(E_k)]$ is a subgraph, therefore, $G[V(E_k)]$ is also a planar graph.
Since $G[V(E_k)]$ is a simple graph (because $G$ is a simple graph), it follows from Proposition \ref{prop:planar-graph-edges} that  $G[V(E_k)]$ has at most $3|V(E_k)| - 6$ edges.
Therefore, if $G[V(E_k)]$ is allowed to have parallel edges, then there are at most $3|V(E_k)| - 6$ pairs of vertices for which there can be one edge or parallel edges.

Suppose for the sake of contradiction that there are more than $3|V(E_k)| - 6$ pairs of vertices $\{x, y\}$ for which the $J_{(\{x, y\})}$ value is true.
Then there are more than $3|V(E_k)| - 6$ pairs of vertices $\{x, y\}$ for which there is a path from $x$ to $y$ in $G$ using the vertices in $Y$.
Planar graphs are closed under contraction of edges.
If we contract all the edges of each of these paths until all the vertices of that path from $Y$ are removed, then this creates an additional edge between $x$ and $y$ in $G[V(E_k)]$.
But if we repeat this procedure for all such pairs of vertices, then this will result in having edges between more than $3|V(E_k)| - 6$  pairs of vertices in $G[V(E_k)]$.
Then, if we count all the parallel edges, there are more than $3|V(E_k)| - 6$ edges in $G[V(E_k)]$.
This will contradict the implication from Proposition \ref{prop:planar-graph-edges} that there can be at most $3|V(E_k)| - 6$ edges of $G$, having $G$ been a simple graph.
This completes the proof of the lemma.
\end{proof}

%\todo[inline]{Diptapriyo: my notation unification is going to resume here.}

{
Our next step is to invoke the marking scheme {\sf MarkSPFG}($G, G_f, s, t, k$) as we did in Section \ref{sec:kernel-soln-size}.}
%We now the same  similar marking scheme ${\sf MarkPlanar}(G, G_f, s, t, k)$ as before.
%We describe this for the sake of completeness and clarity.
%\begin{itemize}
%	\item For each pair $(x, y)$ of vertices from $V_L$, compute a shortest path $P_{x, y}$ from $x$ to $y$ that uses only the vertices of $V_I$ as internal vertices.
%	\item If $P_{x, y}$ has at most $k$ edges, then mark the edges of $P_{x, y}$. Otherwise, when $P_{x, y}$ has more than $k$ edges, then do not mark any edge of $P_{x, y}$.
%	\item Finally, for every pair $(x, y)$ from $V_L$, mark the edges of a shortest path $Q_{x, y}$ from $x$ to $y$ in $G$ if $Q_{x, y}$ has at most $k$ edges.
%\end{itemize} 
%%%
Let $E_t \subseteq E(G)$ be the set of all the edges that are marked by the procedure {\sf MarkSPFG($G, G_f, s, t, k$)} and $E_M = E_t \cup H \cup R$.
We output the graph $G(V(E_M), E_M)$ that is a subgraph (not necessarily induced) of $G$.
In short we represent this output graph by $G'$.
%Observe that $|E_M| \leq k {{|V_L|}\choose{2}}$.
Then the following lemma holds, the proof of which is similar to the proof of Lemma \ref{Lemma:kernel-vc-fg-general}.

\begin{lemma}
\label{lemma:equivalence-spfg-g-planar}
The instance $I(G,G_f,s,t,k)$ is a \yes-instance if and only if $I(G', G_f[E_M],s,t,k)$ is a \yes-instance.
\end{lemma}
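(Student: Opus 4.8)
The plan is to reproduce, almost verbatim, the two-directional argument of Lemma~\ref{Lemma:kernel-vc-fg-general}; the only genuine change is that the marked connectivity paths now run through the planar interior $V_I$ (with endpoints in $V_L$) instead of through $Y$, and that the number of relevant endpoint pairs is controlled by Lemma~\ref{lemma:G-number-of-other-edges} rather than by a trivial count. Throughout I would use the partition $E(G)=H\uplus L\uplus R$ and the bounds of Lemma~\ref{lemma:bounds-for-H-R}, and --- exactly as in the general case, where $s,t$ were added to $V(E_k)$ --- I would first record that $s$ and $t$ are to be treated as anchors, i.e. placed in $V_L$. This enlarges $|V_L|$ by at most two and hence disturbs none of the size estimates, but it is what lets the marking scheme preserve $s$-$t$ connectivity.

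For the backward direction ($\Leftarrow$), I would start from a solution $E^*$ of $I(G[E_M],G_f[E_M],s,t,k)$. Since $E_M\subseteq E(G)$, an $s$-$t$ path in $G[E^*]$ is immediately an $s$-$t$ path in $G$, so only the covering property must be transferred from $G_f[E_M]$ to $G_f$. Here I would use that $H\cup R\subseteq E_M$, so every vertex of $G_f$ lying outside $E_M$ is in $L$; consequently any $G_f$-edge that is not present in $G_f[E_M]$ has an endpoint in $L$ whose entire $G_f$-neighbourhood is contained in $H$. Arguing as in Lemma~\ref{lemma:bounds-for-H-R} that a size-$\le k$ solution must contain $H$, I would conclude that all edges of $G_f$ touching $H\cup L$ are covered by $H$, while the remaining edges lie in $G_f[R]\subseteq G_f[E_M]$ and are covered by $E^*$; hence $E^*$ solves the original instance.

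For the forward direction ($\Rightarrow$), let $E^*$ be an original solution with $|E^*|\le k$ and $P$ an $s$-$t$ path in $G[E^*]$; by Lemma~\ref{lemma:bounds-for-H-R} we have $H\subseteq E^*$. I would construct $\hat E^*\subseteq E_M$ by \emph{keeping} $H\cup(E^*\cap R)$ verbatim --- these lie in $E_M$ and already discharge the whole covering obligation --- and by \emph{rerouting} connectivity. First discard every edge of $E^*\setminus E_M$ that is off $P$; each such edge is in $L$ (because $E_M\supseteq H\cup R$), so its deletion cannot uncover any $G_f$-edge. Then cut $P$ at its anchors $V_L\cup\{s,t\}$ into maximal subpaths whose internal vertices lie in $V_I$; every edge of such a subpath is incident to $V_I$, hence lies in $L$, and the subpath $P^*$ from $x$ to $y$ has at most $|P|\le k$ edges. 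The scheme $\mathsf{MarkPlanar}$ therefore stored a shortest $x$-$y$ path through $V_I$ of length at most $|P^*|$ inside $E_M$, and replacing each $P^*$ by its marked counterpart turns $P$ into an $s$-$t$ walk in $G[E_M]$ of no greater length. Extracting an $s$-$t$ path and noting that no replacement lengthens its piece gives $|\hat E^*|\le|E^*|\le k$.

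The step I expect to be the main obstacle is the transfer of the \emph{covering} property, i.e. making the interaction between the induced forcing graph $G_f[E_M]$ and the full graph $G_f$ watertight; this is precisely the place where one must exploit that every vertex of $H$ is forced into any admissible solution, so that the edges dropped when passing from $G_f$ to $G_f[E_M]$ are exactly those already covered by $H$. By contrast, planarity itself enters the equivalence only indirectly: it is Lemma~\ref{lemma:G-number-of-other-edges} that keeps the marked set $E_t$ (and hence $E_M$) well defined and small, but once $E_M$ is fixed the correctness argument is structurally identical to that of Lemma~\ref{Lemma:kernel-vc-fg-general}.
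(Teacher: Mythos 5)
Your forward direction is essentially the paper's own argument and is sound; in fact, your decision to place $s$ and $t$ in $V_L$ as anchors repairs a genuine omission in the paper's description of ${\sf MarkPlanar}$ (without it, even $s$-$t$ connectivity need not survive in $G[E_M]$). The genuine gap is in your backward direction, at the step ``arguing as in Lemma~\ref{lemma:bounds-for-H-R} that a size-$\le k$ solution must contain $H$.'' That argument forces $H$ into a solution because every vertex of $H$ has degree at least $k+1$ \emph{in $G_f$}; but the set $E^*$ you start from is only a vertex cover of $G_f[E_M]$, and in $G_f[E_M]$ a vertex of $H$ can have very small (even zero) degree, because its $G_f$-neighbours lying in $L$ are deleted unless the marking scheme happened to mark them for connectivity reasons. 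Nothing then forces $E^*$ to contain such a vertex, and the $G_f$-edges joining it to its deleted $L$-neighbours stay uncovered, so the conclusion ``all edges of $G_f$ touching $H\cup L$ are covered by $H$'' collapses. Concretely, let $G$ be the (planar) disjoint union of an $s$-$t$ path with exactly $k$ edges, an edge $h=uv$, and edges $f_1,\dots,f_{k+1}$, and let $E(G_f)=\{hf_i : 1\le i\le k+1\}$. Then $H=\{h\}$, $R=\emptyset$, and each $f_i$ is an unmarked vertex of $L$, so $E_M$ consists of $h$ and the path edges and $G_f[E_M]$ is edgeless; the $k$ path edges alone solve the reduced instance, yet the original instance is a \no-instance, since any solution must contain $h$ in addition to the $k$ path edges. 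So the backward implication fails for $E_M$ as defined, and no argument can close your gap without changing the construction.

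In fairness, you have reproduced exactly the weakness of the paper: its proof of Lemma~\ref{lemma:equivalence-spfg-g-planar} writes out only the forward direction, and the backward direction of Lemma~\ref{Lemma:kernel-vc-fg-general}, to which it implicitly defers, asserts ``$H \subseteq E^*$'' for a cover $E^*$ of $G_f[E_M]$ with no justification---precisely the step you could not make rigorous. The natural repair is to strengthen the kernel rather than the proof: for every $h\in H$, additionally keep in $E_M$ some $k+1$ of its $G_f$-neighbours (these are edges of $G$, so this adds $\OO(k^2)$ further vertices and does not disturb the $\OO(k^3)$ bound of Theorem~\ref{thm:SPFG-planar-input}). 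Then every vertex of $H$ retains degree at least $k+1$ in $G_f[E_M]$, any vertex cover of $G_f[E_M]$ of size at most $k$ must contain $H$, and your backward argument---which is otherwise correctly organised---goes through verbatim.
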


\begin{proof}
We first give the forward direction ($\Rightarrow$) of the proof.
Let $I(G, G_f, s, t, k)$ be a \yes-instance and $E^*$ be a set of at most $k$ edges from $G$ such that there is a path from $s$ to $t$ in $G(V, E^*)$ and $E^* \subseteq V(G_f)$ forms a vertex cover of $G$.
If $E^* \subseteq E_M$, i.e. the set of edges from the subgraph $G[E_M]$, then we are done.
We assume that there is some edge $e \in E^*$ such that $e \notin G[E_M]$.
Note that any such edge must be an edge $e \in L$.
% then we do the following to replace such an edge.
Since $|E^*| \leq k$, it follows that $H \subseteq E^*$.
If $e$ is not in an $s$-$t$ path in $G'$, then $e$ must be contained in $L$.
% we replace $e$ by $\hat e$ such that $\hat e \in N_{G_f}(e) \cap H$.
As $H \subseteq E^*$, removing $e$ from the solution $E^*$ does not violate the property that $E^* \setminus \{e\}$ is a vertex cover in $G_f[E_M]$.
The interesting case is when $e$ is in an $s$-$t$ path $P^*$ in $G$.
Since there can be more than one such edge, we consider each of those subpaths $P^* \subseteq P$ (one at a time).
%In particular, let $P^* = P_1^* \uplus P_2^* \uplus \cdots \uplus P_r^*$ such that every $P_i^*$ is an $x$-$y$ path for some $x_i, y_i \in V_L$ and the internal vertices are from $V_I$. 
Let $P^*$ be a path from $x$ to $y$ for some $x, y \in V(E_k)$.
Since $|P^*| \leq k$, it follows that we have marked the edges of a shortest path $P_{x, y}$ with at most $k$ edges from $x$ to $y$ in $G$.
Either $P_{x, y}$ uses only the vertices of $Y$, or it is a shortest path from $x$ to $y$ in $G$.
Hence, it follows that $|E(P_{x, y})| \leq |E(P^*)|$.
Since all the edges of $P_{x, y}$ are marked, we replace the subpath $P^*$ by $P_{x, y}$.
This replacement procedure does not affect the connectivity between $s$ and $t$ and does not increase the number of edges.
We repeat this process for every such edge that is in $P^*$ but not in $G'$.

{ For the backward ($\Leftarrow$) of the proof, let the $I(G', G_f[E_M],s,t,k)$ be a {\yes}-instance.
What is crucial here is that the edges in $G'$ are also present in $G$.
Hence, the edges of $G'$ are present in $G_f[E_M]$ as vertices.
If $G'$ contains a set of at most $k$ edges $E^*$ such that $G'[E^*]$ has an $s$-$t$ path $E^*$ is a vertex cover of $G_f[E_M]$.
Since $|E^*| \leq k$, due to Observation \ref{obs:H-criteria-reduced-graph}, $H \subseteq E^*$.
It means that for every $e \in L$, $N_{G_f}(e) \subseteq H$.
By the procedure of {\sf MarkSPFG}($G, G_f, s, t, k$), any edge $e \notin E_M$ is present in $L$.
Hence, $E^*$ also forms a vertex cover of $G_f$.
In particular, a path between $s$ and $t$ passing through a (proper) subset of edges in $E^*$ is an $s$-$t$ path in $G$.
Hence, $I(G, G_f, s, t, k)$ is a {\yes}-instance.
}
This completes the proof.
\end{proof}

%\todo[inline]{Diptapriyo: do we need to give this proof of the above lemma? The arguments are the same as in Lemma \ref{Lemma:kernel-vc-fg-general}.}

Using the above lemma, we have the following result.

\begin{theorem}
\label{thm:SPFG-planar-input}
{\spfgp} admits a kernel with $\OO(k^3)$ vertices when $G$ is a planar graph.
\end{theorem}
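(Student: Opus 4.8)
The plan is to output the instance $I(G[E_M], G_f[E_M], s, t, k)$ produced by the $H/L/R$ partition together with the ${\sf MarkPlanar}$ marking scheme, set $W = V(E_M)$, and then argue that $W$ contains only $\OO(k^3)$ vertices. Correctness is already in hand: Lemma~\ref{lemma:equivalence-spfg-g-planar} establishes that $I(G,G_f,s,t,k)$ and $I(G[E_M],G_f[E_M],s,t,k)$ are equivalent, so the entire remaining task is the vertex count. (Here I would also fold $s$ and $t$ into $V_L$ so that they survive in the kernel.)

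First I would account for the ``heavy'' part of $W$. By Lemma~\ref{lemma:bounds-for-H-R}, on a \yes-instance $|H|\le k$ and $G_f[R]$ has at most $k^2$ edges, so $H\cup R$ consists of $\OO(k^2)$ edges of $G$; the set $V_L$ of their endpoints therefore satisfies $|V_L|=\OO(k^2)$ (and if the cardinality bound fails we may safely reject the instance). These $V_L$ vertices already account for $\OO(k^2)$ of the kernel's vertices.

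The crux is bounding the interior vertices from $V_I$ that the marked paths drag in, and this is precisely where planarity is used. A naive marking over all $\binom{|V_L|}{2}=\OO(k^4)$ pairs would reproduce the $\OO(k^5)$ bound of the general case. Instead, Lemma~\ref{lemma:G-number-of-other-edges} shows that at most $3|V_L|-6=\OO(|V_L|)=\OO(k^2)$ pairs $\{x,y\}$ have $J_{(\{x,y\})}$ true, i.e.\ admit \emph{any} $x$-$y$ path whose internal vertices all lie in $V_I$. Hence the scheme actually marks a path $P_{x,y}$ through $V_I$ for only $\OO(k^2)$ pairs, and every marked path has at most $k$ edges and thus at most $k-1$ internal vertices. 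This contributes $\OO(k^2)\cdot\OO(k)=\OO(k^3)$ vertices of $V_I$.

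The step I expect to require the most care is showing that the second family of marked paths---the shortest $x$-$y$ paths $Q_{x,y}$ computed in all of $G$---introduces no additional vertices beyond this bound. The idea is that a shortest $G$-path between two vertices of $V_L$ alternates between vertices of $V_L$ and maximal segments whose interior lies in $V_I$; by optimality of shortest paths each such segment is itself a shortest $V_I$-internal path between two $V_L$-vertices of length at most $k$, so (after fixing a canonical shortest path per pair to eliminate ambiguity from ties) its interior coincides with one of the $P_{x,y}$ paths already counted above. Consequently every vertex on a marked $Q_{x,y}$ lies in $V_L$ or on some marked $P$-path, and the $Q$-paths add nothing new. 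Summing the three contributions yields $|W|=\OO(k^3)$, so outputting $(G[W],G_f[E_M],s,t,k)$ gives the desired kernel for {\spfgp} when $G$ is planar.
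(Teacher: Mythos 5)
Your proposal is correct and follows the paper's overall architecture (the $H/R/L$ partition, the $V_L/V_I$ split, Lemmas~\ref{lemma:bounds-for-H-R} and~\ref{lemma:G-number-of-other-edges}, the ${\sf MarkPlanar}$ scheme, and Lemma~\ref{lemma:equivalence-spfg-g-planar} for correctness), but it diverges at exactly the step you flagged as delicate --- and there your treatment is actually \emph{more} rigorous than the paper's. The paper's proof of Theorem~\ref{thm:SPFG-planar-input} asserts that ``there are at most $3|H\cup R|-6$ pairs for which there is a path between them in $G$'' and then marks two paths per such pair; but Lemma~\ref{lemma:G-number-of-other-edges} only bounds the pairs joined by a path whose \emph{internal vertices lie in $V_I$}, whereas the pairs joined by an arbitrary path in $G$ can number $\Theta(|V_L|^2)=\Theta(k^4)$, and the scheme as literally defined marks a $Q_{x,y}$ for all of them --- which alone would reproduce only the $\OO(k^5)$ bound of the general case. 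Your canonical-splicing argument closes precisely this gap, and it is sound: fix the $P_{u,v}$ first; any shortest $x$-$y$ path in $G$ decomposes at its $V_L$-vertices into $V_I$-internal segments, each of which is a subpath of a shortest path and hence has length exactly $dist_G(u,v)$, which equals $|P_{u,v}|$ (the segment is $V_I$-internal, so $|P_{u,v}|\le$ segment length, and $P_{u,v}$ is a $u$-$v$ path, so $|P_{u,v}|\ge dist_G(u,v)$); splicing in the $P_{u,v}$'s therefore yields a minimum-length $x$-$y$ walk, which is necessarily a path, and taking these spliced paths as the marked $Q_{x,y}$ ensures every $Q$-vertex lies in $V_L$ or on an already-counted $P$-path. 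An alternative repair, closer to what the paper presumably intends, is to drop the $Q$-paths entirely and strengthen the replacement step of Lemma~\ref{lemma:equivalence-spfg-g-planar}: split each unmarked subpath of the solution path at \emph{every} $V_L$-vertex it visits, so each piece is $V_I$-internal and is replaced by its $P$-path, and then only the $\OO(k^2)$ pairs of Lemma~\ref{lemma:G-number-of-other-edges} are ever marked; your route has the advantage of leaving Lemma~\ref{lemma:equivalence-spfg-g-planar} untouched as a black box. Either way the count is $\OO(k^2)+\OO(k^2)\cdot\OO(k)=\OO(k^3)$ vertices, and your folding of $s,t$ into $V_L$ fixes a small omission in the paper's planar section.
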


\begin{proof}
Let $I(G, G_f, s, t, k)$ be an input instance to the {\spfgp} such that $G$ is a planar graph.
In the first stage of our kernelization algorithm, we partition the vertices of $G_f$, i.e., $V(G_f) = H \uplus R \uplus L$ as above.
If $I(G, G_f, s, t, k)$ be a \yes-instance, then $|H| \leq k$ and $G_f[H \cup R]$ has $\OO(k^2)$ edges.
Since there is no vertex in $H \cup R$ that is isolated in $G_f$ and $G_f[H \cup R]$ has $\OO(k^2)$ edges, it follows that $G_f[H \cup R]$ has $\OO(k^2)$ vertices.

During the second phase of our kernelization algorithm, we call upon the marking algorithm denoted as ${\sf MarkSPFG}(G, G_f, s, t, k)$ and output the graph $G'(V(E_M), E_M)$.
Note that $H \cup R$ has $\OO(k^2)$ vertices.
We only focus on the unordered pairs $(x, y)$ of vertices from $V(E_k)$ such that there is a path from $x$ to $y$ using the edges of $G$.
It follows from Lemma \ref{lemma:G-number-of-other-edges} that there are at most $3|V(E_k)| - 6$ such pairs for which there is a path between them in $G$.
For each pair of vertices from $V(E_k)$, we mark the edges of a shortest path of length at most $k$.
Independently, for every such pair, we mark the edges of another shortest path with at most $k$ edges.
This ensures that the number of edges that are marked by {\sf MarkSPFG}($G, G_f, s, t, k$) is $\OO(k^3)$.
We keep only the edges of $G$ that have both endpoints in $V(E_k)$ or the edges that are marked using this marking scheme.
As $|V(E_k)|$ is $\OO(k^2)$, the number of vertices that are spanned by the above-mentioned edges is also $\OO(k^3)$.
We delete the other vertices of the graph that are not spanned by the above set of edges.
This completes the proof that if the primary graph is planar, then {\spfgp} admits a kernel with $\OO(k^3)$ vertices.
\end{proof}

\subsection{Improved Kernels when the Forcing Graph is in Special Graph Classes}
\label{sec:forcing-graph-restricted}

In this section, we explain how the kernelization upper bound proved in Theorem \ref{Th:kernel-SP-FG} can be improved when the forcing graph belongs to some special graph classes.
Note from the proof of Theorem \ref{Th:kernel-SP-FG} that the preprocessing algorithm needs to reduce the vertices of two different categories separately.
The first category is the {\em hitting part in $G_f$} and the second category is the {\em $(s, t)$-connectivity in $G$}.
In the previous section, we first proved Lemma \ref{lemma:bounds-for-H-R}, which states that for every \yes-instance, the number of vertices that are required to intersect the edges of $G_f$ is $\OO(k^2)$.
But in such a case, $G_f$ is any arbitrary graph.
%\todo[inline]{Diptapriyo: can we change wording for next sentence? "we will prove" looks somewhat odd as per Review-2.}
The subsequent lemma statements illustrate that when $G_f$ is in some special graph classes, e.g., cluster graphs, { or} bounded degree graphs, we can ensure that $\OO(k)$ vertices are sufficient for the hitting part, i.e., to hit all the edges of $G_f$.
But before proving them, let us provide some intuitions.
When $G_f$ is a cluster graph, every connected component of $G_f$ is a clique.
If $C$ is a clique in a graph $G_f$, then at least $|C| - 1$ vertices of $C$ are part of any vertex cover of $G_f$.
In case the input instance is a \yes-instance, the number of vertices in a feasible solution is at most $k$.
Hence, the following lemma statements hold.

\begin{lemma}
\label{lemma:G_f-is-cluster}
If $G_f$ is a cluster graph and $I(G, G_f, s, t, k)$ is a \yes-instance, then $G_f$ has at most $2k$  vertices that are not isolated in $G_f$.
If $G_f$ is a bounded degree graph with maximum degree at most $\eta$ and $I(G, G_f, s, t, k)$ is a \yes-instance, then $G_f$ has at most $2k\eta$ vertices that are not isolated in $G_f$.
\end{lemma}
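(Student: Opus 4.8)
The plan is to treat the two cases separately, in each case exploiting that a \yes-instance admits a solution $E^*$ which is simultaneously a vertex cover of $G_f$ with $|E^*| \le k$, and then counting how many non-isolated vertices such a size-bounded vertex cover can possibly touch.

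For the cluster-graph case, I would first record the elementary fact that any vertex cover of a clique $K_r$ must contain at least $r-1$ of its vertices, since the vertices left outside the cover form an independent set and a clique has no independent set of size two. Let $C_1, \dots, C_p$ be the connected components of $G_f$ that are cliques of size at least two (precisely the components containing non-isolated vertices), with $|C_i| = c_i \ge 2$. Each $C_i$ contributes at least $c_i - 1$ vertices to $E^*$, and the cliques are vertex-disjoint, so $\sum_{i=1}^{p} (c_i - 1) \le |E^*| \le k$, i.e. $\sum_i c_i \le k + p$. Combining this with $\sum_i c_i \ge 2p$ (since every $c_i \ge 2$) gives $p \le \tfrac12 \sum_i c_i$, hence $\tfrac12 \sum_i c_i \le k$ and $\sum_i c_i \le 2k$. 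As $\sum_i c_i$ is exactly the number of non-isolated vertices of $G_f$, the first claim follows.

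For the bounded-degree case, I would argue via the closed neighbourhood of the cover. Since $E^*$ is a vertex cover, every edge of $G_f$ has an endpoint in $E^*$, so every non-isolated vertex either lies in $E^*$ or is adjacent to a vertex of $E^*$; that is, the non-isolated vertices are contained in $N_{G_f}[E^*]$. As $V(G_f) \setminus E^*$ is independent, each non-isolated vertex outside $E^*$ is hit by a distinct edge incident to $E^*$, and $E^*$ is incident to at most $\sum_{u \in E^*} \deg_{G_f}(u) \le k\eta$ edges; thus at most $k\eta$ non-isolated vertices lie outside the cover, and the total number of non-isolated vertices is $\OO(k\eta)$, which is the desired order bound (the cover itself contributes at most a further $k$ vertices).

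Both arguments are short. The only mildly delicate point is the averaging step in the cluster case, where one must restrict attention to components of size at least two before invoking $c_i \ge 2$, since isolated vertices and singleton cliques would otherwise spoil the inequality $p \le \tfrac12 \sum_i c_i$. These two bounds then feed directly into the improved kernelization, replacing the $\OO(k^2)$ bound on the hitting part from Lemma~\ref{lemma:bounds-for-H-R} by an $\OO(k)$ bound whenever $\eta$ is treated as a constant.
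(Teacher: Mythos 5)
Your cluster-graph argument is essentially the paper's own proof in different packaging: the paper also fixes a solution $X^*$, observes that every component $C$ containing an edge satisfies $|C \cap X^*| \ge |C| - 1$, and uses $|C| \ge 2$ to conclude $|C| \le 2|C \cap X^*|$ before summing over components; your route via $\sum_i (c_i - 1) \le k$ together with the averaging step $p \le \tfrac{1}{2}\sum_i c_i$ is the same counting, and your care in discarding singleton components is exactly the point the paper handles by restricting to components ``that have at least an edge.'' For the bounded-degree case, however, your argument genuinely differs from the paper's, and it is the sounder of the two. The paper deduces that $G_f$ has at most $k\eta$ edges and then asserts that this ``means'' there are at most $k\eta$ non-isolated vertices; that step silently drops a factor, since $k\eta$ edges can span up to $2k\eta$ endpoints. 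Your closed-neighbourhood count avoids this slip: every non-isolated vertex lies in $N_{G_f}[E^*]$, at most $k\eta$ of them outside the cover (one distinct cover-incident edge each) and at most $k$ inside, giving $k(\eta+1)$. In fact the constant $k\eta$ claimed in the lemma is not attainable in general: take $G$ with an edge $e = st$ plus $\eta$ further edges, and let $G_f$ be the star joining $e$ to each of them; then $(G, G_f, s, t, 1)$ is a yes-instance with $E^* = \{e\}$, yet $G_f$ has $\eta + 1$ non-isolated vertices, exceeding $k\eta = \eta$. So your bound $k(\eta+1) = \OO(k\eta)$ is the one that should be recorded; since Theorem~\ref{thm:SPFG-other-special} only needs an $\OO(k)$ bound on the non-isolated vertices for constant $\eta$, nothing downstream is affected by this correction.
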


\begin{proof}
Let $I(G, G_f, s, t, k)$ be a \yes-instance and $G_f$ be a cluster graph.
Observe that for every connected component $C$ of $G_f$, all but one vertex from $C$ must be part of any (optimal) solution.
Clearly, if the number of connected components of $G_f$ that have an edge is more than $k$, then any feasible solution to the instance $I(G, G_f, s, t, k)$ must have more than $k$ vertices implying that $I(G, G_f, s, t, k)$ is a \no-instance.
Therefore, there are at most $k$ connected components in $G_f$ that have at least an edge.
Since any feasible solution $X^*$ to $I(G, G_f, s, t, k)$ also must have all but at most one vertex from each such connected component $C$, it must be that $|C \cap X| \geq |C| - 1$.
Therefore, $|C| \leq |C \cap X| + 1$.
Additionally, for every connected component $C$ of $G_f$ that has an edge, $|C \cap X| \geq 1$.
Hence, $|C| \leq |C \cap X| + 1 \leq |C \cap X| + |C \cap X|$.
Hence, if $C$ is a connected component of $G_f$ that has an edge, then $|C| \leq 2|C \cap X|$.
Subsequently, if $\ccC$ denotes the set of all components of $G_f$ that have an edge, then 

$$\sum\limits_{C \in \ccC} |C| \leq 2\sum\limits_{C \in \ccC} |C \cap X|$$

Note that $\sum\limits_{C \in \ccC} |C \cap X| \leq |X|$.
As $|X| \leq k$, therefore, $\sum\limits_{C \in \ccC} |C| \leq 2k$.
Hence, $G_f$ can have at most $2k$ vertices that are not isolated.
%\todo[inline]{Why not $|C| \leq |C \cap X| +1$ ?}

On the other hand, let $G_f$ be a bounded degree graph having maximum degree at most $\eta$.
If $I(G, G_f, s, t, k)$ is a \yes-instance, then having a vertex cover $X^*$ of size at most $k$ of $G_f$ ensures that $G_f$ can have at most $k\eta$ edges.
It means that there are at most $2k\eta$ vertices that are not isolated in $G_f$.
\end{proof}

Now onward, we proceed assuming that the given instances are {\yes}-instances for both graph classes. The above lemma ensures that the number of non-isolated vertices in $G_f$ is bounded by $2k$ when $G_f$ is a cluster graph and by $k\eta$ when $G_f$ is a bounded degree graph with maximum degree at most $\eta$.
Moreover, if $G_f$ has at most $k\eta$ vertices for some constant $\eta$, it holds that $G$ has at most $k\eta$ edges.
%Let $E_k$ be the set of all such non-isolated vertices in $G_f$.
But it is possible that there are some vertices that are isolated in $G_f$, but they are edges in $G$.
A subset of such edges can be useful in providing connectivity between $s$ and $t$ in $G$.
Let $E_k \subseteq V(G_f)$ be the set of vertices that are not isolated in $G_f$, and that is the set of edges in $G$.
{Let $V(E_k)$ denote the set of vertices of $G$ that are the endpoints of these edges of $E_k$.
In addition, we also add $s$ and $t$, hence $V(E_k) = V(E_k) \cup \{s, t\}$}.
Let $Y = V(G) \setminus V(E_k)$.
We use a marking procedure {\sf MarkSPFG}($G, G_f, s, t, k$) as before, but describe it for the sake of completeness.

%\begin{itemize}
%	\item For each pair $(x, y)$ of vertices from $V_L$, compute a shortest path $P_{x, y}$ from $x$ to $y$ in $G$ that uses only the vertices of $V_I$ as internal vertices.
%	\item If $P_{x, y}$ has at most $k$ edges, then mark the edges of $P_{x, y}$. Otherwise, when $P_{x, y}$ has more than $k$ edges, do not mark any edge of $P_{x, y}$.
%	\item Finally, for each pair $(x, y)$ of vertices from $V_L$, compute a shortest path $Q_{x, y}$ from $x$ to $y$ in $G$ when $Q_{x, y}$ has at most $k$ edges.
%\end{itemize}

Let $E_t$ be the set of all the edges that are marked by the { marking scheme {\sf MarkSPFG}$(G, G_f, s, t, k)$ as we did in Section \ref{sec:kernel-soln-size}}, and let $E_M = E_t \cup E_k$.
{Our output graph is $G(V(E_M), E_M)$.}
For the simplicity of presentation, we use $G'$ to denote our output graph.
We have the following lemma.
There are some similarities in arguments with the proof arguments of Lemma \ref{lemma:equivalence-spfg-g-planar}, but there are some differences in arguments as well.
So, we present the proof here for the sake of completeness.

\begin{lemma}
\label{lemma:vertex-bound-special-forcing-graph}
The input instance $I(G, G_f, s, t, k)$ is a \yes-instance if and only if the output instance $I(G', G_f[E_M], s, t, k)$ is a \yes-instance.
\end{lemma}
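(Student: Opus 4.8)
The plan is to follow the same two-directional replacement argument used in Lemma~\ref{Lemma:kernel-vc-fg-general} and Lemma~\ref{lemma:equivalence-spfg-g-planar}, exploiting the fact that the only edges of $G$ discarded from $E_M$ are edges that are isolated vertices of $G_f$, and whose sole purpose (if any) is to route the $s$-$t$ path through $V_I$. The vertex-cover constraint is therefore essentially untouched by the reduction, and all the work goes into preserving $s$-$t$ connectivity.

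For the backward direction ($\Leftarrow$), the key observation I would use is that every edge of $G_f$ joins two non-isolated vertices, so both of its endpoints lie in $V_L^f \subseteq E_M$; hence $G_f[E_M]$ retains \emph{all} edges of $G_f$. Consequently, if $E^* \subseteq E_M$ is a vertex cover of $G_f[E_M]$ that induces an $s$-$t$ path in $G[E_M]$, then $E^*$ is already a vertex cover of the full forcing graph $G_f$, and since $E_M \subseteq E(G)$, the same $s$-$t$ path persists in $G$. This yields a solution to $I(G, G_f, s, t, k)$ of the same size, so the reduced instance being a \yes-instance implies the original is.

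For the forward direction ($\Rightarrow$), I would start from a solution $E^*$ with $|E^*| \le k$ together with an $s$-$t$ path $P \subseteq G(V, E^*)$, and massage it into a solution contained in $E_M$. Every edge of $E^*$ that is non-isolated in $G_f$ already lies in $V_L^f \subseteq E_M$, so the edges requiring attention are exactly those isolated in $G_f$; these are irrelevant to the vertex-cover requirement. Any isolated-in-$G_f$ edge of $E^*$ not used by $P$ is simply deleted. For those lying on $P$, I would cut $P$ at the vertices of $V_L$ (after adding $s$ and $t$ to $V_L$ so that the endpoints of $P$ qualify) into maximal subpaths whose internal vertices all lie in $V_I$. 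Each such subpath $P^*$ is an $x$-$y$ path with $x, y \in V_L$, and since $P$ has at most $k$ edges, $|P^*| \le k$; hence the marking scheme has marked a shortest $x$-$y$ path $P_{x,y}$ (through $V_I$) or $Q_{x,y}$ (in $G$) with $|P_{x,y}| \le |P^*|$ whose edges all lie in $E_M$. Replacing each offending subpath by its marked counterpart preserves the edge budget and produces an $s$-$t$ walk inside $G[E_M]$, which contains an $s$-$t$ path; combined with the retained (non-isolated) cover edges, this is a solution of size at most $k$ to $I(G[E_M], G_f[E_M], s, t, k)$.

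The main obstacle I anticipate is the bookkeeping of the forward direction: making sure the endpoints of every replaceable subpath really lie in $V_L$ (which forces us to add $s$ and $t$ to $V_L$ before marking), that each maximal $V_I$-internal subpath has at most $k$ edges so that it was in fact marked, and that the successive subpath replacements stitch together into a connected $s$-$t$ walk without increasing the edge count or re-introducing an uncovered edge of $G_f$. Once these details are handled exactly as in Lemma~\ref{lemma:equivalence-spfg-g-planar}, the claimed equivalence follows.
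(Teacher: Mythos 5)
Your proof is correct and follows essentially the same route as the paper: the backward direction rests, exactly as in the paper, on the observation that $V_L^f \subseteq E_M$, so every edge of $G_f$ survives and a vertex cover of $G_f[E_M]$ is already a vertex cover of $G_f$, while your forward direction is the subpath-replacement argument of Lemma~\ref{Lemma:kernel-vc-fg-general} and Lemma~\ref{lemma:equivalence-spfg-g-planar}, which is precisely what the paper's much terser forward direction (it only records that $E(G_f) = E(G_f[E_M])$) implicitly defers to. Your added care in adjoining $s$ and $t$ to $V_L$ before marking, so that the endpoints of the extremal $V_I$-internal subpaths are valid marked pairs, patches a detail the paper handles explicitly only in Section~\ref{sec:kernel-soln-size} (where $s,t$ are added to $V(E_k)$) and silently omits here, and it is indeed needed for the replacement argument to go through.
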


\begin{proof}
We first give the backward direction ($\Leftarrow$) of the proof.
Let $I(G', G_f[E_M], s, t, k)$ be a \yes-instance and $X^*$ be a set of at most $k$ edges that is a solution to this instance.
Note that in the subgraph $G'(V, X^*)$, there is already a path from $s$ to $t$.
If we can prove that $X^*$ is a vertex cover of $G_f$, then we are done.
What is crucial here is that no vertex from $V(E_k)$ was deleted in $G'$.
Therefore, $X^*$ must already be a vertex cover of $G_f$, implying that $I(G, G_f, s, t, k)$ is a \yes-instance.

Next, we give the forward direction $(\Rightarrow)$ of the proof.
Let $I(G, G_f, s, t, k)$ be a \yes-instance and $X^*$ be a feasible solution to this instance having at most $k$ edges from $G_f$.
What is crucial here is that $E(G_f) = E(G_f[E_M])$.
It means that no edge from $G_f$ was deleted in the graph $G_f[E_M]$.
\end{proof}

\begin{theorem}
\label{thm:SPFG-other-special}
{\spfgp} admits a kernel with $\OO(k^3)$ vertices when $G_f$ is either a cluster graph or a graph with bounded degree.
\end{theorem}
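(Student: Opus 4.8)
The plan is to combine the two ingredients that have already been prepared: the linear bound on the non-isolated part of $G_f$ from Lemma~\ref{lemma:G_f-is-cluster}, and the equivalence established by the marking scheme in Lemma~\ref{lemma:vertex-bound-special-forcing-graph}; after that it only remains to count the surviving edges. First I would dispose of rejection. If the number of non-isolated vertices of $G_f$ exceeds $2k$ (cluster case) or $k\eta$ (bounded-degree case with maximum degree $\eta$), then the contrapositive of Lemma~\ref{lemma:G_f-is-cluster} certifies that the instance is a \no-instance, so the algorithm outputs a fixed trivial \no-instance. Otherwise, treating $\eta$ as a constant, I may assume that $G_f$ has $\OO(k)$ non-isolated vertices.

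Next I would transfer this bound to $G$. Each non-isolated vertex of $G_f$ is an edge of $G$, so $V_L^f$ consists of $\OO(k)$ edges and the set $V_L$ of their endpoints in $G$ satisfies $|V_L| = \OO(k)$. This is precisely where the present setting improves on the general bound of Theorem~\ref{Th:kernel-SP-FG}: the ``hitting part'' shrinks from $\OO(k^2)$ to $\OO(k)$ vertices. Running ${\sf MarkSpecial}(G, G_f, s, t, k)$, for each of the $\binom{|V_L|}{2} = \OO(k^2)$ pairs $(x,y)$ of $V_L$ the scheme marks at most one path $P_{x,y}$ (through internal vertices of $V_I$) and at most one path $Q_{x,y}$ (in $G$), each with at most $k$ edges and hence $\OO(k)$ vertices. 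Thus $|E_t| = \OO(k^2)\cdot\OO(k) = \OO(k^3)$, and since $E_M = E_t \cup V_L^f$ adds only $\OO(k)$ further edges, we obtain $|E_M| = \OO(k^3)$ and $V(E_M)$ spans $\OO(k^3)$ vertices. Invoking Lemma~\ref{lemma:vertex-bound-special-forcing-graph}, the instance $I(G[E_M], G_f[E_M], s, t, k)$ is equivalent to the input, and it is the desired kernel.

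I expect the argument to be essentially routine, since the two substantive facts --- the linear bound on the non-isolated part of $G_f$ and the correctness of the path-replacement underlying the marking scheme --- are already isolated in Lemmas~\ref{lemma:G_f-is-cluster} and~\ref{lemma:vertex-bound-special-forcing-graph}. The only point that genuinely needs care is the rejection step: one must argue explicitly, through the contrapositive of Lemma~\ref{lemma:G_f-is-cluster}, that an over-large non-isolated part of $G_f$ already witnesses a \no-instance, so that the $\OO(k)$ bound on $|V_L|$ is legitimately available throughout the counting. For the bounded-degree case one should additionally record that $\eta$ is treated as a fixed constant; otherwise an extra factor of $\eta^2$ enters the kernel size.
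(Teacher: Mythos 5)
Your proposal is correct and follows essentially the same route as the paper's proof: bound $|V_L^f|$ (hence $|V_L|$) by $\OO(k)$ via Lemma~\ref{lemma:G_f-is-cluster}, count the $\OO(k^3)$ edges produced by the marking scheme over the $\OO(k^2)$ pairs of $V_L$, and invoke Lemma~\ref{lemma:vertex-bound-special-forcing-graph} for the equivalence of the reduced instance. Your explicit rejection step (outputting a trivial \no-instance when the non-isolated part of $G_f$ is too large) is a minor tightening of the paper's implicit assumption that the instance is a \yes-instance, not a different approach.
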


\begin{proof}
 As $G_f$ admits a vertex cover of size at most $k$, and either $G_f$ is a cluster graph or the maximum degree is bounded by a fixed value $\eta$, the number of edges in $G_f$ is $\OO(k^2)$.
%Since $|V_L^f|$ is $\OO(k)$, so is $|V_L|$ from the definition of $V_L$.
%\todo[inline]{Diptapriyo: the use of $V_L^f$ and $V_L$ are wrong. Replace it by the next two sentences.}
By definition $E_k$ denote the set of vertices of $G_f$ that are not isolated and we added $s$ and $t$ into $V(E_k)$.
Hence, $|V(E_k)|$ is $\OO(k)$.
Recall, the marking scheme marks a shortest path of length at most $k$ for every pair of vertices $x, y \in V(E_k)$ and puts them in $E_M$. Hence, $|E_M|$ is $\OO(k^3)$.
Moreover, from Lemma \ref{lemma:vertex-bound-special-forcing-graph}, $I(G, G_f, s, t, k)$ is a \yes-instance if and only if $I(G[E_M], G_f[E_M], s, t, k)$ is a \yes-instance.
Therefore, {\spfgp} admits a kernel with $\OO(k^3)$ vertices when the forcing graph is either a cluster graph or a graph with a bounded degree.
\end{proof}

Observe that $2$-ladders or $3$-ladders are all subclasses of graphs of degree at most one and two, respectively.
Hence, our result also complements the APX-hardness result of Darmann et al. \cite{darmann2009determining} with the following consequence of the above theorem.

\begin{corollary}
{\rm SPFG} admits a kernel with $\oh (k^3)$ vertices when the forcing graph is a collection of disjoint paths and cycles.
\end{corollary}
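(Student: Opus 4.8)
The plan is to recognize that a collection of disjoint paths and cycles is nothing but a special case of a bounded-degree graph, and then to invoke Theorem \ref{thm:SPFG-other-special} directly. First I would make the elementary structural observation that in any disjoint union of paths and cycles every vertex has degree at most $2$: the two endpoints of a path have degree $1$, every internal vertex of a path has degree $2$, and every vertex lying on a cycle has degree $2$. Consequently such a forcing graph $G_f$ is a bounded-degree graph with maximum degree $\eta \le 2$.

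With this observation in hand the corollary is immediate. Theorem \ref{thm:SPFG-other-special} guarantees that {\spfgp} admits a kernel with $\OO(k^3)$ vertices whenever the forcing graph is a bounded-degree graph, so applying that theorem with $\eta = 2$ yields the claimed $\OO(k^3)$-vertex kernel. No separate bound on the isolated vertices of $G_f$ is needed, since those are precisely the edges of $G$ that only ever contribute to $(s,t)$-connectivity and are already handled by the marking phase.

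I do not expect any genuine obstacle here: the entire content is the reduction of the stated graph class to the bounded-degree setting already treated. If a self-contained argument were preferred, one could instead re-trace the chain leading to Theorem \ref{thm:SPFG-other-special} with $\eta = 2$ fixed throughout. Concretely, Lemma \ref{lemma:G_f-is-cluster} would bound the number of non-isolated vertices of $G_f$ by $k\eta = 2k$ on any \yes-instance, the marking scheme ${\sf MarkSpecial}(G, G_f, s, t, k)$ would then mark $\OO(k^3)$ edges to preserve $(s,t)$-connectivity, and Lemma \ref{lemma:vertex-bound-special-forcing-graph} would certify that the resulting reduced instance is equivalent to the original one.
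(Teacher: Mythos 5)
Your proposal is correct and follows essentially the same route as the paper: the paper also observes that disjoint paths and cycles have maximum degree at most $2$ and then cites Theorem \ref{thm:SPFG-other-special} for bounded-degree forcing graphs to obtain the $\OO(k^3)$-vertex kernel. Your optional self-contained re-tracing via Lemma \ref{lemma:G_f-is-cluster}, the marking scheme, and Lemma \ref{lemma:vertex-bound-special-forcing-graph} matches the paper's underlying argument as well.
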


%First, we state fe
\section{Structural Parameterization of SPFG}
In this section, we provide a result on the structural parameterization of the {\spfgp}.
We consider a few structural parameters of the forcing graph.
First, we consider when the deletion distance ($k$) to $2K_2$-free graph of $G_f$.
We restate the formal definition of {\sc SPFG-$2K_2$-Free-Deletion} problem.
%\todo[inline]{problem names are different}

\defparprob{\textsc{SPFG-$2K_2$-Free-Deletion}}{
 A simple, undirected graph $G(V,E)$, two distinct vertices $s,t \in V(G)$, a forcing graph $G_{f}(E, E^{'})$, a set $X \subseteq V(G_f)$ such that $G_f - X$ is $2K_2$-free and an integer $\ell$.
}{$|X|$}{Is there a subset $E^* \subseteq E$ of at most $\ell$ edges such that the subgraph $G(V, E^{*})$ contains an $s$-$t$ path in $G$, and $E^{*}$ forms a vertex cover in $G_{f}$?
}

Now, we prove that {\sc SPFG-{$2K_2$}-Free-Deletion} is fixed-parameter tractable.
We need the following lemma to prove the result.

\begin{lemma}
\label{lemma:spfg-2K2-del-enum}
Given an instance $(G, G_f, X, s, t, \ell)$ to the {\sc SPFG-{$2K_2$}-Free-Deletion} problem, the set of all minimal vertex covers of $G$ can be enumerated in $2^{|X|}n^{\OO(1)}$-time.
\end{lemma}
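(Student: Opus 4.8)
The plan is to enumerate the required minimal vertex covers by branching over the modulator $X$ and then exploiting the $2K_2$-free remainder. A preliminary remark fixes the object being listed: the only graph in the instance carrying the $2K_2$-free modulator $X$ is the forcing graph, since by definition $X \subseteq V(G_f)$ with $G_f - X$ being $2K_2$-free, and the vertices of $G_f$ are precisely the edges $E(G)$. A feasible solution is a subset of $E(G) = V(G_f)$ that must cover $G_f$, so the minimal vertex covers to be listed are exactly the minimal vertex covers of $G_f$, read as minimal cover-sets of edges of $G$. Write $H = G_f - X$, an induced $2K_2$-free graph. By the enumeration fact already established in the proof of Lemma~\ref{Th:2_K_2} (Farber's bound \cite{farber1989diameters} on $C_4$-free complements together with Algorithm~3 of \cite{dhanalakshmi20162k2}), every $2K_2$-free graph on $N$ vertices has $N^{\OO(1)}$ minimal vertex covers, all computable in polynomial time.

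The branching is over the trace of a candidate cover on $X$. For each of the $2^{|X|}$ subsets $X_C \subseteq X$, I hypothesise that a minimal vertex cover $C$ of $G_f$ satisfies $C \cap X = X_C$. I first discard the branch unless $X_C$ is a vertex cover of $G_f[X]$, since otherwise an uncovered edge inside $X$ survives. Because the vertices $X \setminus X_C$ are then excluded from $C$, every $G_f$-neighbour of $X \setminus X_C$ lying in $V(H)$ is forced into $C$; call this set $F = N_{G_f}(X \setminus X_C) \cap V(H)$. The remaining freedom is confined to $V(H) \setminus F$: the part $C \cap (V(H) \setminus F)$ must be a vertex cover $D$ of the induced subgraph $H' = H[V(H) \setminus F]$, which is again $2K_2$-free. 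I therefore enumerate the polynomially many minimal vertex covers $D$ of $H'$, form the candidate $C = X_C \cup F \cup D$, and retain it only if it is a minimal vertex cover of $G_f$ (a polynomial-time check per candidate). Over all branches this yields at most $2^{|X|} n^{\OO(1)}$ candidates in $2^{|X|} n^{\OO(1)}$ total time.

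The correctness direction that needs care is that no minimal vertex cover is missed. Given a minimal vertex cover $C^*$ of $G_f$, set $X_C = C^* \cap X$, $F = N_{G_f}(X \setminus X_C) \cap V(H)$ and $D^* = C^* \cap (V(H) \setminus F)$; clearly $F \subseteq C^*$ and $D^*$ is a vertex cover of $H'$. The crux is to show $D^*$ is \emph{minimal} in $H'$, so that it is actually produced by the inner enumeration. If some $v \in D^*$ were redundant in $H'$, I will argue $C^* \setminus \{v\}$ still covers $G_f$: the $H$-edges at $v$ are covered by $(D^* \setminus \{v\}) \cup F$, and $v$ has no neighbour in $X \setminus X_C$ (otherwise $v \in F$, contradicting $v \in V(H) \setminus F$), so its $X$-edges all land in $X_C \subseteq C^*$. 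This contradicts minimality of $C^*$; hence $D^*$ is minimal in $H'$ and $C^* = X_C \cup F \cup D^*$ appears among the candidates, whereupon the final minimality filter outputs it.

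I expect this minimality-transfer step to be the main obstacle, since the naive decomposition only guarantees that $D^*$ is \emph{some} vertex cover of $H'$, and it is precisely the $2K_2$-free routine (which lists only minimal covers) that makes the refinement essential. The branching over $X_C$, the check that $X_C$ covers $G_f[X]$, and the forcing of $F$ are all routine once this decomposition is in place.
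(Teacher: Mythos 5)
Your proposal is correct and takes essentially the same approach as the paper: branch over the $2^{|X|}$ possible traces of a cover on $X$, force the $G_f$-neighbours of the unchosen part of $X$ into the cover, and then enumerate the polynomially many minimal vertex covers of the remaining (hereditary, hence still $2K_2$-free) graph via Algorithm~3 of \cite{dhanalakshmi20162k2}. Your explicit minimality-transfer argument and the final minimality filter spell out a correctness step that the paper's proof leaves implicit.
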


\begin{proof}
Let $(G, G_f, X, s, t, \ell)$ to the {\sc SPFG-{$2K_2$}-Free-Deletion} problem.
By hypothesis, the forcing graph $G_f - X$ is a $2K_2$-free graph.
The enumeration algorithm works as follows.
The first step is to guess a vertex subset $X_1 \subseteq X$ such that $X_1$ is a vertex cover of $G_f[X]$.
For every such vertex subset $X_1 \subseteq X$, note that the vertices of $X \setminus X_1$ are not picked into the solution.
Therefore, every neighbor of $X \setminus X_1$ that is in $G_f - X$ must be part of a (minimal) vertex cover that intersects $X$ in $X_1$.
{Let $Y_1 = N_{G_f}(X \setminus X_1) \setminus X$, the neighbors of $X \setminus X_1$ in $G_f - X$.}
As $2K_2$-free graphs are hereditary and $G_f - X$ is a $2K_2$-free graph, it must be that $G_f - (X \cup Y_1)$ is also a $2K_2$-free graph.
Due to Algorithm 3 from Dhanalakshmi et al. \cite{dhanalakshmi20162k2}, the set of all minimal vertex covers of $G_f - (X \cup Y_1)$ can be computed in polynomial time.
We repeat this procedure for every subset $X_1 \subseteq X$ such that $X_1$ is a vertex cover of $G_f[X]$.
Since there are $2^{|X|}$ possible choices of subsets of $X$ and the number of minimal vertex covers of $G_f - (X \cup Y_1)$ is polynomially many, the collection of all minimal vertex covers of $G_f$ can be enumerated in polynomial time.
\end{proof}

Now, we give a proof of our result using the above lemma.

\begin{theorem}
\label{thm:2K2-free-FPT-proof}
{\sc SPFG-{$2K_2$}-Free-Deletion} admits an algorithm that runs in $2^{|X|}n^{\oh(1)}$-time.
\end{theorem}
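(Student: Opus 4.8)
The plan is to combine the enumeration guaranteed by Lemma~\ref{lemma:spfg-2K2-del-enum} with the polynomial-time extension routine of Lemma~\ref{lemma:extend-part-poly}. The observation that drives the whole argument is that any feasible solution $E^*$ to {\sc SPFG-$2K_2$-Free-Deletion} must, by definition, be a vertex cover of $G_f$, and hence must contain some minimal vertex cover of $G_f$. Therefore, to decide the instance it suffices to range over all minimal vertex covers $S$ of $G_f$, and for each one ask whether $S$ can be extended to an $s$-$t$-connecting edge set using at most $\ell - |S|$ additional edges. The extension step is exactly the {\sc Ext-SPFG} problem, which we may solve in polynomial time by Lemma~\ref{lemma:extend-part-poly}.

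Concretely, I would first invoke Lemma~\ref{lemma:spfg-2K2-del-enum} to enumerate the family $\mathcal{X}$ of all minimal vertex covers of $G_f$ in time $2^{|X|} n^{\oh(1)}$. For each $S \in \mathcal{X}$, I would run the {\sc Ext-SPFG} algorithm of Lemma~\ref{lemma:extend-part-poly} with $S$ as the prescribed subset that must be contained in the solution; this returns, in polynomial time, a minimum-cardinality edge set $E^*_S \supseteq S$ such that $G(V, E^*_S)$ contains an $s$-$t$ path. I would then report \yes\ precisely when some $S \in \mathcal{X}$ yields $|E^*_S| \leq \ell$, and \no\ otherwise. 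The total running time is the number of enumerated covers times the polynomial cost per cover, i.e. $2^{|X|} n^{\oh(1)}$, as claimed.

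The correctness argument has two directions. For soundness, any $E^*_S$ produced with $|E^*_S| \le \ell$ is by construction a superset of a vertex cover $S$ of $G_f$, hence itself a vertex cover of $G_f$, and contains an $s$-$t$ path; so it is a genuine feasible solution of size at most $\ell$. For completeness, suppose a feasible solution $E^*$ of size at most $\ell$ exists. Since $E^*$ is a vertex cover of $G_f$, it contains some minimal vertex cover $S \in \mathcal{X}$; because $E^*$ already connects $s$ to $t$ and contains $S$, it is a feasible extension of $S$, so the optimum $E^*_S$ returned for this particular $S$ satisfies $|E^*_S| \le |E^*| \le \ell$. Thus the algorithm detects the \yes-instance. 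This relies on $\mathcal{X}$ containing \emph{every} minimal vertex cover, which is exactly what Lemma~\ref{lemma:spfg-2K2-del-enum} guarantees.

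The one subtlety to be careful about, and the main thing worth spelling out, is that it genuinely suffices to iterate over \emph{minimal} vertex covers rather than all vertex covers: if the true solution $E^*$ corresponds to a non-minimal vertex cover, the extra cover edges beyond a contained minimal cover $S$ only increase cost, so extending the smaller $S$ via {\sc Ext-SPFG} can only do at least as well. Formally, $S \subseteq E^*$ and both connect $s$ to $t$, so the minimum extension of $S$ costs no more than $|E^*|$. Hence restricting to minimal covers loses nothing, and the bound on $|\mathcal{X}|$ from Lemma~\ref{lemma:spfg-2K2-del-enum} controls the running time.
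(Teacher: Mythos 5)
Your proposal is correct and follows essentially the same route as the paper: enumerate all minimal vertex covers of $G_f$ via Lemma~\ref{lemma:spfg-2K2-del-enum}, then for each one solve the extension-to-$s$-$t$-connectivity step in polynomial time and accept iff some cover extends to a solution of size at most $\ell$. If anything, your explicit invocation of the {\sc Ext-SPFG} routine (Lemma~\ref{lemma:extend-part-poly}) and the argument that restricting to \emph{minimal} covers loses nothing are spelled out more carefully than in the paper's own terse proof.
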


\begin{proof}
Similar as before, the first step is to enumerate the set of all minimal vertex covers of $G_f$ of size at most $\ell$.
We invoke Lemma \ref{lemma:spfg-2K2-del-enum} to enumerate all the minimal vertex covers in $2^{|X|}n^{\oh(1)}$-time.
Then, for every minimal vertex cover $X$ of $G_f$, we focus on the vertices $V(X)$ spanned by the set of edges in $X$ and check if there is a path from $s$ to $t$ in $G$ using at most $\ell - |X|$ edges.
Since this procedure takes polynomial time, {\sc SPFG-{$2K_2$}-Free-Deletion} admits an algorithm that runs in $2^{|X|}n^{\oh (1)}$-time.
\end{proof}

Observe that our algorithm heavily uses the characteristics that {\sc SPFG-${\GG}$-Deletion} is FPT when ${\GG}$ is a $2K_2$-free graph.
In particular, the proof idea is similar to the proof of Lemma \ref{Th:2_K_2} and exploits that the collection of all the minimal vertex covers in $2K_2$-free graph can be enumerated in polynomial time.
On the other hand, if SPFG is {\npc} when the forcing graph is in ${\GG}$ for some other graph class, then {\sc SPFG-{$\GG$}-Deletion} becomes {\npc} even when the size of the modulator to $\GG$ is a fixed-constant.

\section{Conclusion and Open Problems}
In this paper, we have initiated the study of {\spfgp} under the realm of parameterized complexity.
One natural open problem is to see if our kernelization results for {\spfgp} can be improved, i.e., can we get a kernel with $\oh(k^4)$ vertices for SPFG when both $G$ and $G_f$ are arbitrary graphs?
We strongly believe that those results can be improved, but some other non-trivial techniques might be necessary.
Independently, it would be useful to have a systematic study of this problem under positive disjunctive constraints containing three (or some constant number of) variables.
Additionally, from the perspective of kernelization complexity, we leave the following open problems for directions of future research.
\begin{itemize}
	\item Can we get a kernel with $\oh (k^4)$ vertices for SPFG when the primary graph $G$ is arbitrary graph but the forcing graph $G_f$ is a graph of degeneracy $\eta$? Our results only show that if the forcing graph is of bounded degree, then we can get a kernel with $\oh(k^3)$ vertices.
	In fact, even if $G_f$ is a forest, it is unclear if we can get a kernel with $\oh(k^3)$ or $\oh(k^4)$ vertices.
	\item What happens to the kernelization complexity when $G_f$ is an interval graph while $G$ is an arbitrary graph? Can we get a kernel with $\oh(k^3)$ vertices in such a case?
	\item Finally, can we generalize our result of Theorem \ref{thm:SPFG-planar-input} when $G$ is a graph of bounded treewidth or a graph of bounded degeneracy?
\end{itemize} 

%\paragraph{Acknowledgement:} Research of Diptapriyo Majumdar has been supported by Science and Engineering Board Research (SERB) grant SRG/2023/001592.

\bibliographystyle{plain}
%\bibliography{paper}

\end{document}